\newtheorem{theorem}{Theorem}
\newtheorem{lemma}[theorem]{Lemma}
\newdefinition{definition}[theorem]{Definition}
\newdefinition{corollary}[theorem]{Corollary}
\DeclareRobustCommand{\P}{\textbf{P}}
\newcommand{\PSPACE}{\textbf{PSPACE}}
\newcommand{\NP}{\textbf{NP}}
\newcommand{\SH}{\mathcal{S}}
\newcommand{\N}{\mathbb{N}}
\newcommand{\syma}{\textsc{Syn-Majority}}
\newcommand{\sema}{\textsc{Seq-Majority}}
\newcommand{\bsma}{\textsc{Bseq-Majority}}
\newcommand{\clma}{\textsc{Clock-Majority}}
\newcommand{\itci}{\textsc{Iter-Circuit}}
\newcommand{\mitci}{\textsc{Iter-Mon-Circuit}}
\newcommand{\kdmitci}{\textsc{$1$-Depth-Iter-Mon-Circuit}}
\newcommand{\socp}{$\SH$-\textsc{Ocp}}
\newcommand{\bsocp}{\textsc{Bseq-Ocp}}
\newcommand{\syocp}{\textsc{Syn-Ocp}}
\newcommand{\seocp}{\textsc{Seq-Ocp}}
\newcommand{\shmajority}{$\SH$-\textsc{Majority}}
\title{PSPACE-Completeness of Majority Automata Networks }
\author[uai]{Eric Goles}
\author[uo]{Pedro Montealegre\fnref{becas}}
\author[uc]{Ville Salo\fnref{anillo-basal}}
\author[utu]{Ilkka T\"orm\"a\fnref{aca}}
\address[uai]{Facultad de Ingenier\'ia y Ciencias,  Universidad Adolfo Ib\'a\~nez, Santiago, Chile}
\address[uo]{Univ. Orl\'{e}ans, INSA Centre Val de Loire, LIFO EA 4022,  Orl{\'e}ans , France}
\address[uc]{Centro de Modelamiento Matem\'atico, Universidad de Chile, Santiago, Chile}
\address[utu]{University of Turku, TUCS -- Turku Centre for Computer Science}
\begin{document}

\begin{abstract}

We study the dynamics of majority automata networks when the vertices are updated according to a block sequential updating scheme. In particular, we show that the complexity of the problem of predicting an eventual state change in some vertex, given an initial configuration, is \PSPACE-complete.

\end{abstract}

\begin{keyword}
Boolean network \sep majority network \sep prediction problem \sep PSPACE
\end{keyword}

\maketitle

\section{Introduction}

A \emph{threshold network} is a dynamical system over a connected undirected graph, where at each vertex is assigned a \emph{state} that evolves at discrete time steps accordingly to a \emph{vertex threshold function}, that depends on the current state of the vertex and the states of their neighbors in the graph. 
In this paper we study a particular case of threshold networks, called \emph{majority networks}, where at each time step, the vertices take the state that the majority of their neighbors have. 
Such dynamical systems have been used to model a variety of biological, physical and social phenomena \cite{RevModPhys.81.591,PhysRevE.83.056111,Bornholdt06082008,Davidich1}. However, those systems are not always precisely modeled with a synchronous updating of each vertex, which raises the need for considering different ways of updating the network.

An \emph{updating scheme} is a total preorder over the set of vertices, such that at each time step, vertices that are first in this scheme evolve before the others. Updating schemes are classified in three groups: \emph{synchronous}, \emph{sequential} and \emph{block-sequential}. A synchronous updating scheme means that every vertex evolves in parallel. Sequential updating schemes are the other extreme: no two vertices are updated at the same time. The block-sequential updating schemes are an intermediate situation, where the vertices are partitioned into collections called \emph{blocks}, and the vertices of each block update at the same time.

A natural problem in automata networks is \emph{prediction}: given an initial configuration and an updating scheme, to predict the future states. This problem has been studied at least in \cite{TesisGoles,PaperGoles,Hopfield01041982,DBLP:books/daglib/0032430}. One possible strategy is to simulate the evolution of each vertex step by step; since the automata network is finite, the dynamics will eventually enter a loop, and the simulation strategy will result in a complete description of the evolution of the network. A straightforward follow-up question is if this solution is \emph{efficient}, i.e. if there exist better solutions, or if this strategy outputs in \emph{reasonable} time. To evaluate this, the measure chosen is the computational complexity of the problem. In \cite{Goles2014} it is shown that, for threshold networks in general, the simulation strategy is indeed efficient for both synchronous and sequential updating schemes. This was shown by proving that the simulation strategy always stops in a number of steps that is polynomial in the size of the graph, and then the prediction problem is in the class \P. Further, it is shown that for block sequential updating schemes and the majority networks, the problem is unlikely to be efficiently solved (it is \NP-hard), leaving the exact complexity classification open.

In this paper we show that in majority networks the prediction problem, restricted to block-sequential updating schemes, is \PSPACE-Complete. We prove this by showing that majority networks can simulate iterated boolean circuits, whose prediction problem is easily seen to be \PSPACE-Complete. Later, we show that this result remains true even when we limit the block sequential updating schemes to have only blocks of constant size (only a constant number of vertices can be updated at the same time), or a constant number of blocks (the number of groups of vertices that are not updated at the same time is constant). In this context \emph{constant} means independent of the number of vertices of the network. Finally, we show how this result is also generalizable to other decision problems, as well as to a more general form of the majority rule.

\section{Preliminaries}

Let $G = (V,E)$ be a simple connected undirected graph, where $V$ is a finite set of vertices and $E$ the set of edges. An \emph{automata network} is a tuple $\mathcal{A} = (G, (f_v)_{v \in V})$, where $f_v : \{0,1\}^V \rightarrow \{0,1\}$ is the \emph{vertex function} associated to the vertex $v$. Here, $\{0,1\}$ is the set of \emph{states}, and vertices in state $1$ are called \emph{active} while vertices in state $0$ are \emph{inactive}. We say that the vertex functions are the \emph{rule} of $\mathcal{A}$. The set $\{0,1\}^V$ is called the set of \emph{configurations}. We sometimes refer to the sets $V$ or $E$, or the vertex functions $f_v$, without having explicitly introduced them, but in these cases, they should be clear from the context.

Let $N(v) \subset V$ be the set of neighbors of vertex $v \in V$, with $v \notin N(v)$. We are interested in the vertex function, called the \emph{majority function}, defined by
\[ f_v(x) = \left\{ \begin{array}{ll}
 1, & \textrm{if~} \sum_{u \in N(v)} x_j > \frac{\vert N(v) \vert}{2}, \\
 0, & \textrm{if~} \sum_{u \in N(v)} x_j \leq \frac{\vert N(v) \vert}{2}
\end{array}  \right. \]
for all configurations $x \in \{0,1\}^V$. This means that a vertex will become active if the strict majority of its neighbors are active, and will become inactive otherwise. In our proofs, the tie-breaking rule is never activated, so it could be chosen arbitrarily.

An \emph{updating scheme} of the automata network $\mathcal{A}$ is a function $S: V \rightarrow \{1, \dots, \vert V \vert \}$ such that if $u, v \in V$ and $S(u) < S(v)$, then the state of $u$ is updated before $v$, and if $S(u) = S(v)$, then the vertices $u$ and $v$ are updated at the same time. When $S(u) = S(v)$ for all $u, v \in V$, so that all vertices are updated at the same time, we have the \emph{synchronous} updating scheme. When $S$ is injective, that is, no two vertices are updated at the same time, we have a \emph{sequential} updating scheme. In our formalism, every updating scheme $S$ is \emph{block sequential}, meaning that the vertex set is partitioned into the subsets $V_k = S^{-1}(k) \subset V$ for $k \in \{0, \ldots, \vert V \vert\}$, such that the sets are updated one after the other, in the order $V_1, V_2, \ldots, V_{\vert V \vert}$, and elements inside each are updated synchronously.

More precisely, a network $\mathcal{A}$ and an updating scheme $S$ define a \emph{global transition function} $F_S: \{0,1\}^V \rightarrow \{0,1\}^V$ as follows. For each $k \in \{1, \ldots, \vert V \vert\}$, let $F_k : \{0,1\}^V \rightarrow \{0,1\}^V$ be the function that updates the vertices in turn $k$ according to $S$:
\[ F_k(x)_v = \left\{ \begin{array}{ll}
 f_v(x), & \textrm{if~} S(v) = k, \\
 x_v,    & \textrm{otherwise.}
\end{array} \right.
\]
Then we define $F_S = F_n \circ F_{n-1} \circ \dots \circ F_1$. For $t \geq 0$, when the updating scheme $S$ is clear from the context, we sometimes denote $F_S^t(x) = x(t)$ for brevity. The iterated application of the global transition function on a given initial configuration generates a dynamic on the set of all configurations. For instance, Figure \ref{fig:exupdatescheme} shows three updating schemes for a graph of nine vertices and the majority rule. Notice that for the same initial configuration we can obtain different dynamics.

\begin{figure}[htp!]
\centering
\begin{tikzpicture}

\tikzstyle{every node}=[circle,draw, scale=0.85]

\node[draw = none, scale=1/0.85] (x1) at (-2,0) {(a)};
\node[draw = none, scale=1/0.85] (x1) at (-2,-3) {(b)};
\node[draw = none, scale=1/0.85] (x1) at (-2,-6) {(c)};

\node[fill=gray!50] (a) at (0,0) {1};

\node[fill=gray!0] (b) at (0,1) {1};
\draw (a)--(b);
\node[fill=gray!0] (b) at (0.75,0.75) {1};
\draw (a)--(b);
\node[fill=gray!0] (b) at (1,0) {1};
\draw (a)--(b);
\node[fill=gray!0] (b) at (0.75,-0.75) {1};
\draw (a)--(b);
\node[fill=gray!0] (b) at (0,-1) {1};
\draw (a)--(b);
\node[fill=gray!0] (b) at (-0.75,-0.75) {1};
\draw (a)--(b);
\node[fill=gray!0] (b) at (-1,0) {1};
\draw (a)--(b);
\node[fill=gray!0] (b) at (-0.75,0.75) {1};
\draw (a)--(b);

\draw[ultra thick,->] (1.5,0) -- (2.5,0);

\node[fill=gray!0] (a) at (4,0) {1};

\node[fill=gray!50] (b) at (4,1) {1};
\draw (a)--(b);
\node[fill=gray!50] (b) at (4.75,0.75) {1};
\draw (a)--(b);
\node[fill=gray!50] (b) at (5,0) {1};
\draw (a)--(b);
\node[fill=gray!50] (b) at (4.75,-0.75) {1};
\draw (a)--(b);
\node[fill=gray!50] (b) at (4,-1) {1};
\draw (a)--(b);
\node[fill=gray!50] (b) at (3.25,-0.75) {1};
\draw (a)--(b);
\node[fill=gray!50] (b) at (3,0) {1};
\draw (a)--(b);
\node[fill=gray!50] (b) at (3.25,0.75) {1};
\draw (a)--(b);


\node[fill=gray!50] (a) at (0,-3) {9};

\node[fill=gray!0] (b) at (0,-2) {1};
\draw (a)--(b);
\node[fill=gray!0] (b) at (0.75,-2.25) {2};
\draw (a)--(b);
\node[fill=gray!0] (b) at (1,-3) {3};
\draw (a)--(b);
\node[fill=gray!0] (b) at (0.75,-3.75) {4};
\draw (a)--(b);
\node[fill=gray!0] (b) at (0,-4) {5};
\draw (a)--(b);
\node[fill=gray!0] (b) at (-0.75,-3.75) {6};
\draw (a)--(b);
\node[fill=gray!0] (b) at (-1,-3) {7};
\draw (a)--(b);
\node[fill=gray!0] (b) at (-0.75,-2.25) {8};
\draw (a)--(b);

\draw[ultra thick,->] (1.5,-3) -- (2.5,-3);

\node[fill=gray!50] (a) at (4,-3) {9};

\node[fill=gray!50] (b) at (4,-2) {1};
\draw (a)--(b);
\node[fill=gray!0] (b) at (4.75,-2.25) {2};
\draw (a)--(b);
\node[fill=gray!0] (b) at (5,-3) {3};
\draw (a)--(b);
\node[fill=gray!0] (b) at (4.75,-3.75) {4};
\draw (a)--(b);
\node[fill=gray!0] (b) at (4,-4) {5};
\draw (a)--(b);
\node[fill=gray!0] (b) at (3.25,-3.75) {6};
\draw (a)--(b);
\node[fill=gray!0] (b) at (3,-3) {7};
\draw (a)--(b);
\node[fill=gray!0] (b) at (3.25,-2.25) {8};
\draw (a)--(b);

\draw[ultra thick,->] (5.5,-3) -- (6,-3);

\node[draw=none] (p) at (6.6,-3){$\dots$};

\draw[ultra thick,->] (7,-3) -- (7.5,-3);

\node[fill=gray!50] (a) at (9,-3) {9};

\node[fill=gray!50] (b) at (9,-2) {1};
\draw (a)--(b);
\node[fill=gray!50] (b) at (9.75,-2.25) {2};
\draw (a)--(b);
\node[fill=gray!50] (b) at (10,-3) {3};
\draw (a)--(b);
\node[fill=gray!50] (b) at (9.75,-3.75) {4};
\draw (a)--(b);
\node[fill=gray!50] (b) at (9,-4) {5};
\draw (a)--(b);
\node[fill=gray!50] (b) at (8.25,-3.75) {6};
\draw (a)--(b);
\node[fill=gray!50] (b) at (8,-3) {7};
\draw (a)--(b);
\node[fill=gray!50] (b) at (8.25,-2.25) {8};
\draw (a)--(b);


\node[fill=gray!50] (a) at (0,-6) {1};

\node[fill=gray!0] (b) at (0,-5) {2};
\draw (a)--(b);
\node[fill=gray!0] (b) at (0.75,-5.25) {2};
\draw (a)--(b);
\node[fill=gray!0] (b) at (1,-6) {2};
\draw (a)--(b);
\node[fill=gray!0] (b) at (0.75,-6.75) {2};
\draw (a)--(b);
\node[fill=gray!0] (b) at (0,-7) {2};
\draw (a)--(b);
\node[fill=gray!0] (b) at (-0.75,-6.75) {2};
\draw (a)--(b);
\node[fill=gray!0] (b) at (-1,-6) {2};
\draw (a)--(b);
\node[fill=gray!0] (b) at (-0.75,-5.25) {2};
\draw (a)--(b);

\draw[ultra thick,->] (1.5,-6) -- (2.5,-6);

\node[fill=gray!0] (a) at (4,-6) {1};

\node[fill=gray!0] (b) at (4,-5) {2};
\draw (a)--(b);
\node[fill=gray!0] (b) at (4.75,-5.25) {2};
\draw (a)--(b);
\node[fill=gray!0] (b) at (5,-6) {2};
\draw (a)--(b);
\node[fill=gray!0] (b) at (4.75,-6.75) {2};
\draw (a)--(b);
\node[fill=gray!0] (b) at (4,-7) {2};
\draw (a)--(b);
\node[fill=gray!0] (b) at (3.25,-6.75) {2};
\draw (a)--(b);
\node[fill=gray!0] (b) at (3,-6) {2};
\draw (a)--(b);
\node[fill=gray!0] (b) at (3.25,-5.25) {2};
\draw (a)--(b);

\end{tikzpicture}

\caption{Examples of dynamics in a majority network on the same initial configuration under different updating schemes: synchronous (a), sequential (b) and block sequential (c). Gray vertices are active while the white ones are inactive. The number of a vertex is its value in the updating scheme.}
\label{fig:exupdatescheme}
\end{figure}
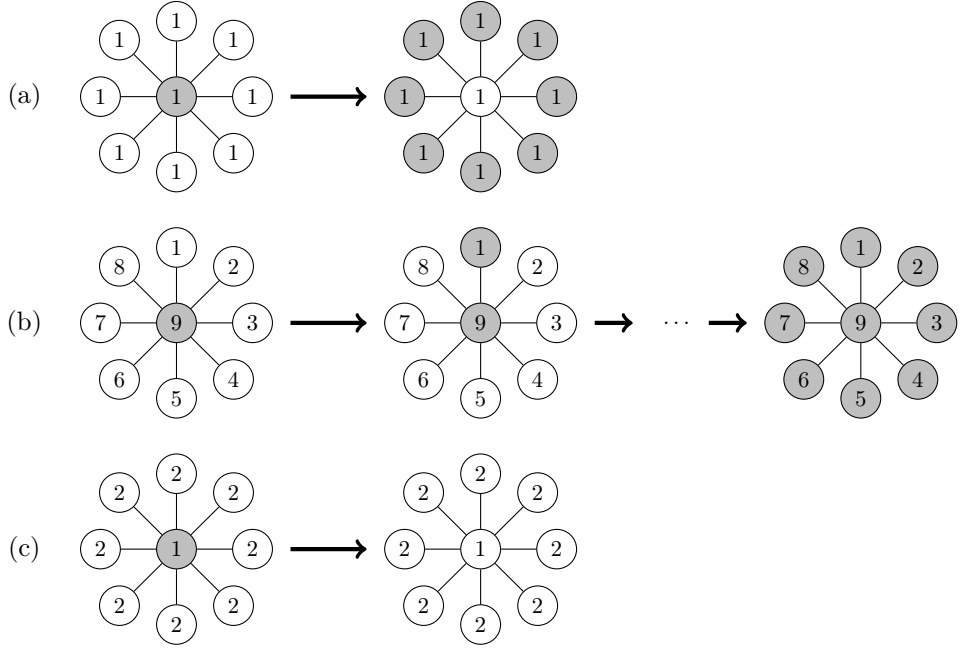

Given an automata network $\mathcal{A}$ and an updating scheme $S$, the \emph{trajectory} of a configuration $x \in \{0,1\}^V$ is the infinite sequence $T_S(x) = (F_S^t(x))_{t \in \N} = (x, x(1), x(2), \ldots)$. When there is no ambiguity, we omit the subindex $S$. The notation $y \in T_S(x)$ means that $y = x(t)$ for some $t \in \N$. Since the set of configurations is finite (its size is $2^{\vert V \vert}$), every trajectory is eventually periodic, so there exist $t \in \N$ and $p > 0$ such that $x(t) = x(t+p)$. We say that the trajectory of $x$ \emph{enters a limit cycle of period $p$} if $p$ satisfies the above and is minimal, and then the set $\{x(t), x(t+1), \dots, x(t+p-1)\} \subset \{0, 1\}^V$ is the \emph{limit cycle} of $x$. A cycle of period $1$ is a \emph{fixed point}. The set of configurations in a limit cycle is called an \emph{attractor}.

We define the \emph{transient length} $\tau_S(x) \in \N$ of a configuration $x$ under the updating scheme $S$ as the number of steps required to reach, for the first time, a configuration in the attractor, and we define the transient length of the automata network $\mathcal{A}$ under the updating scheme $S$ as the the greatest of these values:
\[ \tau_S(\mathcal{A}) = \max \{\tau_S(x) : x \in \{0,1\}^V \}. \]

Suppose that we would like to make predictions about the attractor associated to a configuration $x$, of a majority network $\mathcal{A}$ with updating scheme $S$. Clearly a solution is to simulate the evolution of each vertex until we reach a limit cycle. However, this process can \emph{a priori} take as many steps as there are configurations, or $2^{\vert V \vert}$. A straightforward question is whether more efficient solutions exist, like some algebraic or algorithmic properties that would allow us to make predictions on the size of the attractors, or which vertices will eventually change their initial states. For a family $\SH$ of updating schemes, we define the following decision problem, called the \emph{one vertex prediction problem}: 

\begin{framed}

\noindent {\shmajority}: Given a majority network $\mathcal{A} = (G, (f_v)_{v \in V})$, updating scheme $S \in \SH$, a configuration $x \in \{0,1\}^V$ of $\mathcal{A}$, and a vertex $v \in V(G)$ that is initially inactive ($x_v = 0$), does there exist $y \in T_S(x)$ such that $y_v = 1$?

\end{framed}

The family $\SH$ may contain the synchronous, sequential or block sequential updating schemes. For each case we will have an associated decision problem: \syma, \sema, and \bsma, respectively.


The computational complexity of a decision problem is defined as the amount of resources (usually time or space) required to find an answer, as a function of the input size. Classical complexity theory considers the following fundamental classes: \P{}, the class of problems solvable by a deterministic Turing machine in polynomial time; \NP{}, the class of problems solvable by a nondeterministic Turing machine in polynomial time; and \PSPACE{}, the class of problems solvable by a deterministic Turing machine that uses polynomial space. It is well known that $\P \subset \NP \subset \PSPACE$.\footnote{By $A \subset B$, we denote a not necessarily strict inclusion.} Informally, \P{} is the class of problems with a \emph{feasible} solution in terms of the execution time; \NP{} is the class of problems where it is feasible to verify a given solution; and \PSPACE{} is the class of problems with a feasible solution in terms of the space required to solve the problem.

It is a well-known conjecture that $\P \neq \NP$, and if so, then there exist problems whose solution is feasible to verify, while actually finding a solution is infeasible. The problems in \NP{} which are the most likely to not belong to \P{} are the \NP{}-complete problems, to which any other problem in \NP{} can be reduced by a polynomial time reduction. Thousands of \NP{}-complete problems are known, the best-known example probably being the \emph{Boolean satisfiability problem} \textsc{Sat} \cite{Arora:2009:CCM:1540612}. It is also conjectured that $\NP \neq \PSPACE$, and similarly to \NP{}-Complete problems, the problems in \PSPACE{} that are the most likely to not belong to \NP{} or \P{} are the \PSPACE{}-complete problems, to which any other problem in \PSPACE{} can be reduced in polynomial time. One can also define a notion of completeness for \P{}, and we say a problem is \P{}-complete if every problem in \P{} can be reduced to it by a parallel algorithm in polylogarithmic time.

Since we explicitly use Turing machines with certain properties in our reductions, we give our definition for them. In this article, a \emph{deterministic Turing machine} is a seven-tuple $M = (Q, \Gamma, \Sigma, \delta, B, q_i, q_f)$, containing the \emph{state set} $Q$, \emph{tape alphabet} $\Gamma$, \emph{input alphabet} $\Sigma \subset \Gamma$, \emph{transition function} $\delta : Q \times \Gamma \to Q \times \Gamma \times \{-1, 0, 1\}$, \emph{blank symbol} $B \in \Gamma \setminus \Sigma$, and \emph{initial and final states} $q_i, q_f \in Q$. It operates on a right-infinite tape, and is initialized at the leftmost tape cell. We say that $M$ is \emph{linear bounded}, if there exists $K > 0$ such that on every input $w \in \Sigma^*$, the machine visits at most $K |w|$ distinct tape cells. The \emph{linear bounded prediction problem}, denoted \textsc{Linear-Bounded}, is the problem of determining, given a linear bounded deterministic Turing machine $M$, an input $w \in \Sigma^*$, and a padding of length $K |w|$, whether $M$ accepts $w$ in space $K |w|$. It is easy to show that this problem is \PSPACE-complete.

A \emph{Boolean circuit} is a directed acyclic graph $C$ whose vertices that are not sources are labeled with either $\wedge$ or $\vee$, or possibly $\neg$ if their in-degree is exactly one. The source vertices of $C$ are called its \emph{inputs}, the sinks are called \emph{outputs}, and the other vertices are called \emph{gates}. The circuit is \emph{monotone} is it contains no gates with label $\neg$. If $C$ has $n$ inputs and $m$ outputs, it computes a function $C : \{0,1\}^n \to \{0,1\}^m$ in the obvious way. For each gate of a circuit, its \emph{layer} is the length of the shortest path from an input to the gate. The \emph{iterated (monotone) Boolean circuit problem} is the problem of determining, given a (monotone) Boolean circuit $C : \{0,1\}^n \to \{0,1\}^n$, an input string $x \in \{0,1\}^n$ and an index $i \in \{1, \ldots, n\}$, whether there exists $t \in \N$ such that $C^t(x)_i = 1$. It is denoted \itci{} (\mitci{}, respectively).




\subsection{Previous results}

Threshold automata networks, called also \emph{neural networks}, have been widely studied \cite{TesisGoles,PaperGoles,Hopfield01041982,DBLP:books/daglib/0032430}. In \cite{PaperGoles}, Goles \emph{et al.} give a characterization of the attractors and transients of such networks through the use of a monotone operator, analogous to the \emph{spin glass energy} \cite{Hopfield01041982}. It is shown that for the synchronous updating scheme, when the adjacency matrix $A$ of the underlying graph is symmetric (equivalently, the graph is undirected, which always holds in our case), then the attractors are only fixed points or cycles of period two. Further, if the diagonal elements of the matrix are non-negative (it has no self-loops of negative weight, which again holds in our case), then the sequential updating scheme admits only fixed points. If $A$ is the adjacency matrix of an automata network of $n$ vertices, the transient lengths are bounded by $\mathcal{O}(n^2)$, no matter what updating scheme is used. In terms of complexity, these results imply that the problems \syma{} and \sema{} are in \P, since the dynamics of the majority rule reaches a limit cycle of length at most $2$ in $\mathcal{O}(n^2)$ steps, so that direct simulation leads to a polynomial time algorithm. In the same paper, it was shown that both problems are \P-complete, which means that they are not likely to be efficiently parallelizable \cite{limitsofpara}. 

For more general updating schemes, that is, block-sequential updating schemes, it is shown in \cite{Goles2014} that it is possible to construct majority automata networks with limit cycles of any period, and using these structures in an appropriate way, it is shown that majority automata networks with block-sequential updating schemes admit limit cycles with super-polynomial period in the size of the network. The possibility of large periods suggests that, unlike in the synchronous and sequential cases, it is not possible to have a monotone operator associated to a majority rule for block-sequential updating schemes. Moreover, in \cite{Goles2014} it is shown that \textsc{Bseq-majority} is \NP-hard, and it is conjectured that this problem is in fact \PSPACE-complete.

\subsection{Contributions}

Our main contribution is the proof of the conjecture proposed in \cite{Goles2014}, showing that \textsc{Bseq-Majority} is \PSPACE-complete, and more specifically, our proof shows that this remains true when we restrict the block sequential updating scheme to have blocks of constant size, or a constant number of blocks. The complete proof is given in Section \ref{section:main}.

In Section \ref{setction:problems} we prove that several other decision problems related to majority networks are also \PSPACE-Complete.  Indeed, we show that if we restrict the block-sequential updating scheme to have a constant number of blocks, the corresponding decision problem is still \PSPACE-Complete. On the other hand, in \textsc{Bseq-Majority} one ask for changes in a single vertex, at some step, given an initial configuration. We show that it is still \PSPACE-complete to ask for changes if the initial configuration is not a part of the input, if we ask for changes for infinitely many steps, or if we ask for changes in every vertex of the network at the same time. 

Finally, in Section \ref{setction:proportion} we show that our results apply to a sort of \emph{generalized majority}, where this time a vertex becomes active not if the majority of their neighbors are active, but if a portion $p$ of the number of neighbors, where $p$ is a fixed constant in $(0,1)$. We finish our paper with some conclusions and remarks.

\section{ \bsma{} is \PSPACE-Complete} \label{section:main}

We begin this section by remarking two properties of majority networks. First, Lemma \ref{lem:Network} states that majority networks can simulate monotone Boolean circuits. Then, Lemma \ref{lem:Clock} states that majority networks can exhibit large limit cycles using blocks of size $2$. The gadgets shown in the proofs are essentially found already in \cite{Goles2014}.

\begin{lemma}
\label{lem:Network}

For every monotone Boolean circuit $C : \{0,1\}^n \to \{0,1\}^m$, there is a majority network $\mathcal{M}$ defined over a graph $G$ of size polynomial in $|C|$ with vertices $v_1, \ldots, v_n, w_1, \ldots, w_m \in V(G)$, and with a sequential update scheme $S$ such that for all $I \in \{0,1\}^n$, if we set $x_{v_i} = I_i$ for all $i \in \{1, \ldots, n\}$ and $x_v = 0$ for all other $v \in V(G)$, then $F_S(x)_{w_j} = C(x)_j$ for all $j \in \{1, \ldots, m\}$. Moreover, if every logic gate of $C$ has degree at most $d$, then every node of $G$ has degree at most $2d-1$.

\end{lemma}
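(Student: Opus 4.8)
The plan is to build $\mathcal{M}$ gate by gate, realizing each wire of $C$ by a dedicated vertex of $G$ and arranging the sequential update order so that, in one sweep of $S$, information propagates from the input vertices $v_1,\dots,v_n$ through all the gate vertices to the output vertices $w_1,\dots,w_m$ in order of circuit layer. The core difficulty is that the majority rule at a vertex depends on \emph{all} its neighbors, not on a designated subset, so to simulate a gate with inputs $a,b$ and output $c$ I cannot simply connect $c$ to $a$ and $b$; I must pad $c$ with auxiliary neighbors held at fixed states so that the threshold condition on $c$ coincides with $a\wedge b$ or $a\vee b$. Concretely, for an $\fAND$-gate with two inputs I would give the output vertex $c$ two "gate-input" neighbors (copies carrying the values of $a$ and $b$) together with one auxiliary neighbor pinned to $0$; then $c$ has three neighbors and becomes active iff at least two are active, i.e. iff both inputs are active. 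For an $\fOR$-gate I instead pin the auxiliary neighbor to $1$, so $c$ becomes active iff at least two of three neighbors are active, i.e. iff at least one input is active. A fan-out-$d$ gate is handled by making $d$ separate copies of the output value, one per outgoing wire; this is why a gate of degree $d$ blows up to degree roughly $2d-1$ (its $d$ inputs and the auxiliary neighbor, or its copies and the auxiliary neighbor). Source/input wires and constants are realized directly by the $v_i$ and by pinned vertices.

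The key steps, in order, are: (1) Fix a layering of $C$ and replace $C$ by an equivalent circuit in which every gate has in-degree and out-degree at most $d$ and every gate reads only from the immediately preceding layer (inserting "identity" gates — which can themselves be simulated by a two-neighbor vertex with one $0$-pinned auxiliary, acting as a delay/copy — to route wires that skip layers); this changes $|C|$ only polynomially. (2) For each wire of the layered circuit create a vertex of $G$; for each gate create its output vertex with the appropriate pinned auxiliary neighbor(s) as described above; the input wires of layer $0$ are exactly $v_1,\dots,v_n$ and the output wires are $w_1,\dots,w_m$. (3) Define $S$ so that $S(v)$ is strictly increasing in the layer of the wire $v$ represents — all layer-$0$ vertices first, then layer-$1$, and so on — with the pinned auxiliary vertices placed late enough (or simply never updated, which is consistent since a vertex pinned to a constant by its own neighborhood will stay constant) that they never corrupt a gate evaluation before it is read. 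One must check that when a layer-$k$ output vertex is updated, all the layer-$(k-1)$ vertices feeding it already hold their correct values (true, since they were updated earlier in the same sweep) and the pinned auxiliaries hold their fixed values (true by construction, provided the initial configuration has $x_v=0$ on all non-input vertices and the $1$-pinned auxiliaries are given enough always-active structure — e.g. a vertex whose only neighbors are active and stay active — so that they are active after their update). (4) Verify that after one application of $F_S$, each $w_j$ holds $C(I)_j$, and confirm the degree bound $2d-1$ from the construction in (2)–(3).

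The main obstacle I expect is bookkeeping around the pinned auxiliary vertices and the $1$-constants: a vertex cannot be pinned to $1$ for free under the majority rule (its neighbors must supply the majority), so I need a small always-active gadget (a clique, or a vertex all of whose neighbors are themselves supported) to generate stable $1$'s, and I must make sure these gadgets, the $0$-auxiliaries, and the ordering $S$ interact so that (a) no auxiliary's value changes during the relevant sweep and (b) the extra edges to these gadgets do not inadvertently change the degree parity of a gate vertex and break the threshold computation. I would handle this by computing, for each gate type and each fan-out, the exact neighborhood size and the exact number of pinned-$0$ versus pinned-$1$ neighbors needed to make "strict majority" agree with $\fAND$ or $\fOR$, and then double-checking the worst case gives degree at most $2d-1$. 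Since, as the paper notes, essentially these gadgets already appear in \cite{Goles2014}, I expect the verification to be routine once the layered normal form and the update order are set up correctly.
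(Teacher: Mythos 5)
Your overall strategy---pad each gate vertex with constant auxiliary neighbors so that strict majority computes $\wedge$/$\vee$, and order the sequential sweep so that values propagate through the whole circuit in a single application of $F_S$---is exactly the paper's. However, two concrete points in your write-up would make the construction fail as stated. First, the update order: you put the input vertices $v_1,\dots,v_n$ first (``all layer-$0$ vertices first''). Updating $v_i$ replaces $I_i$ by the majority of its neighbors, which at that moment are the still-zero gate vertices it feeds, so an active input is erased before any gate reads it. The paper does the opposite: any order in which the inputs of each gate precede that gate, \emph{except that the circuit inputs are updated last}.

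Second, your gadget arithmetic ignores the edges from a gate vertex to the gates it feeds. Those downstream neighbors hold $0$ when the gate is updated and therefore shift the threshold. For example, your $2$-input AND vertex with one $0$-pinned auxiliary has degree $3+m$ once its $m$ output edges are counted; for $m=1$ it would need $3$ active neighbors among $\{$two inputs, output $=0$, auxiliary $=0\}$, which is impossible, and your identity/copy vertex ``with one $0$-pinned auxiliary'' can likewise never activate (its auxiliary must be pinned to $1$ to offset the zero output). The correct bookkeeping, which is the actual content of the paper's gate gadgets, is: a gate with $n$ inputs and $m$ outputs gets, for AND, $|n-m-1|$ auxiliaries set to $0$ if $n-m-1\ge 0$ and to $1$ otherwise, and for OR, $n+m-1$ active auxiliaries; one then checks the degree is at most $2d-1$. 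You correctly identify this computation as the main obstacle but defer it, and the sample counts you do give are wrong---yet this computation is essentially the whole proof. (Your concern about obtaining stable $1$'s is reasonable but unnecessary here: since only one application of $F_S$ is required, an auxiliary only needs to hold its value until its gate is updated, which the ordering guarantees; the paper simply initializes those auxiliaries to $1$.)
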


\begin{proof}
We can assume that the inputs of $C$ have out-degree exactly one. The circuit $C$ is transformed into $G$ as follows. To the inputs and outputs are assigned the vertices $v_i$ and $w_j$, respectively. The AND-gates and OR-gates are replaced by the gadgets shown in Figure~\ref{fig:Gates}. For the update scheme we can choose any ordering where the inputs of a logic gate are updated before the gate itself, except that the inputs are updated last.
\end{proof}

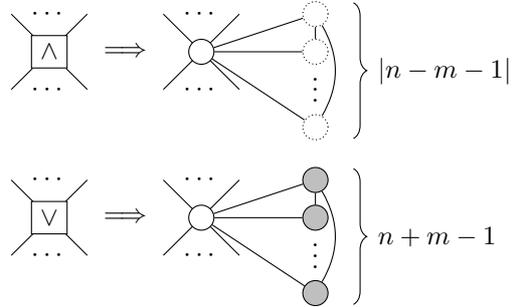
\begin{figure}
\begin{center}
\begin{tikzpicture}

\draw (0,0) -- (-.5,.5);
\draw (0,0) -- (.5,.5);
\draw (0,0) -- (-.5,-.5);
\draw (0,0) -- (.5,-.5);
\node[draw,fill=white] () at (0,0) {$\wedge$};
\node () at (0,.5) {$\cdots$};
\node () at (0,-.5) {$\cdots$};
\node () at (1,0) {$\Longrightarrow$};
\draw (2,0) -- (1.5,.5);
\draw (2,0) -- (2.5,.5);
\draw (2,0) -- (1.5,-.5);
\draw (2,0) -- (2.5,-.5);
\node[draw,fill=white,circle] (g1) at (2,0) {};
\node[draw,densely dotted,fill=white,circle] (v11) at (3.5,.5) {};
\node[draw,densely dotted,fill=white,circle] (v12) at (3.5,0) {};
\node[draw,densely dotted,fill=white,circle] (v13) at (3.5,-1) {};
\path
	(g1) edge node {} (v11) edge node {} (v12) edge node {} (v13)
	(v11) edge node {} (v12)
	(v11) edge [bend left] node {} (v13);
\node () at (3.5,-.4) {$\vdots$};
\node () at (2,.5) {$\cdots$};
\node () at (2,-.5) {$\cdots$};

\draw[decorate,decoration={brace,amplitude=5pt}] (4,.65) -- (4,-1.15);
\node () at (4.2,-.25) [right] {$|n-m-1|$};

\begin{scope}[yshift=-2.2cm]
\draw (0,0) -- (-.5,.5);
\draw (0,0) -- (.5,.5);
\draw (0,0) -- (-.5,-.5);
\draw (0,0) -- (.5,-.5);
\node[draw,fill=white] () at (0,0) {$\vee$};
\node () at (0,.5) {$\cdots$};
\node () at (0,-.5) {$\cdots$};
\node () at (1,0) {$\Longrightarrow$};
\draw (2,0) -- (1.5,.5);
\draw (2,0) -- (2.5,.5);
\draw (2,0) -- (1.5,-.5);
\draw (2,0) -- (2.5,-.5);
\node[draw,fill=white,circle] (g1) at (2,0) {};
\node[draw,fill=gray!50,circle] (v11) at (3.5,.5) {};
\node[draw,fill=gray!50,circle] (v12) at (3.5,0) {};
\node[draw,fill=gray!50,circle] (v13) at (3.5,-1) {};
\path
	(g1) edge node {} (v11) edge node {} (v12) edge node {} (v13)
	(v11) edge node {} (v12)
	(v11) edge [bend left] node {} (v13);
\node () at (3.5,-.4) {$\vdots$};
\node () at (2,.5) {$\cdots$};
\node () at (2,-.5) {$\cdots$};

\draw[decorate,decoration={brace,amplitude=5pt}] (4,.65) -- (4,-1.15);
\node () at (4.2,-.25) [right] {$n+m-1$};
\end{scope}

\end{tikzpicture}
\end{center}
\caption{The gadgets of the logic gates (the gates have indegree $n$ and outdegree $m$). Inputs are on the top. The dotted nodes have value $0$ if $n-m-1 \geq 0$, and $1$ otherwise.}
\label{fig:Gates}
\end{figure}

\begin{lemma}
\label{lem:Clock}
There exists a majority automata network $\mathcal{M}$ over a graph $H$ of degree $3$ containing the vertices $v_s$ for $s \in \{0,1\}^3$ and four additional vertices, a block sequential update scheme $S$, and an initial configuration $x$ such that for all $s \in \{0,1\}^3$ and $t \in \N$ we have $x(t)_{v_s} = s_{t \bmod 3}$.
\end{lemma}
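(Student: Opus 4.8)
The goal is to build a small majority network that acts as a ``clock'' cycling through period-$3$ behaviour, with eight designated vertices $v_s$ indexed by $s \in \{0,1\}^3$ such that at time $t$ the vertex $v_s$ carries the bit $s_{t \bmod 3}$. The plan is to first construct a core ``counter'' on three vertices (together with the four auxiliary vertices) whose synchronous-within-block dynamics, under a carefully chosen block sequential update scheme, traverses a limit cycle of period $3$, and then to attach each $v_s$ so that it reads off the appropriate coordinate of the current counter state.

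First I would build the period-$3$ engine. The idea is to use three ``signal'' vertices $c_0, c_1, c_2$ that should, over time, take the values of a rotating pattern: at time $t$, the triple $(x(t)_{c_0}, x(t)_{c_1}, x(t)_{c_2})$ equals the cyclic shift of a fixed seed by $t$ positions. A single majority vertex cannot by itself shift its value, so the four additional vertices serve as a small feedback gadget (pendant/stabilizer vertices of degree governed so the majority thresholds come out right) that forces $c_i(t+1) = c_{i-1}(t)$ for each $i$ modulo $3$, using the block sequential order to break the apparent simultaneity: updating the blocks in the order $c_0, c_1, c_2$ (or the reverse), one obtains a rotation of the bit pattern after one full application of $F_S$. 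This is exactly the type of ``shift along a cycle realized by an ordered update'' gadget that appears in \cite{Goles2014}; I would reuse that construction, adjusting degrees (all at most $3$) and adding pendant vertices so that no majority tie ever occurs and the tie-breaking rule is irrelevant, as promised in the preliminaries. The block sequential scheme $S$ would put $c_0,c_1,c_2$ (and their attached helpers) into three successive blocks, so that one step of $F_S$ performs one full cyclic rotation.

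Next, for each $s = (s_0,s_1,s_2) \in \{0,1\}^3$, I would introduce the vertex $v_s$ and wire it to the counter so that $x(t)_{v_s}$ equals $s_{(t \bmod 3)}$. Concretely, $v_s$ should compute, at each step, the bit located at position ``current phase'' of the seed; but since the seed is rotating through the $c_i$'s, this is equivalent to reading a fixed linear combination of $c_0,c_1,c_2$ determined by $s$. When $s$ is the all-zero or all-one string the value is constant and $v_s$ is just a pendant to a vertex of that constant value; for the remaining $s$ the required function of $(c_0,c_1,c_2)$ is a threshold/majority-type function of two or three of the $c_i$, which can be realized by a small majority gadget (again padding with one or two helper vertices and edges to constant vertices to set the threshold). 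Each $v_s$ is placed in a block that updates \emph{after} the $c_i$-blocks within the same step, so that it sees the just-rotated counter state; one then checks by induction on $t$ that $x(t)_{v_s} = s_{t \bmod 3}$, the base case being the choice of initial configuration $x$ (set the $c_i$ to the seed and each $v_s$ consistently, all helpers to their forced values).

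The main obstacle I expect is the bookkeeping of thresholds: majority gates are only ``majority of neighbours'', so every gadget that is supposed to compute a rotation, a projection, or a constant has to be padded with the right number of degree-cheap helper vertices (and edges to vertices pinned at $0$ or $1$) so that the strict-majority threshold lands exactly where intended and the degree-$3$ bound is respected, all while keeping the total number of extra vertices at the stated ``four additional vertices''. Getting the rotation gadget to work with only four shared helpers, rather than four per $c_i$, is the delicate part; this is where I would lean most heavily on the explicit construction in \cite{Goles2014}, verifying that it already meets the degree and vertex-count constraints and that, under the ordered block updates, the induction $x(t)_{v_s}=s_{t\bmod 3}$ goes through cleanly for all eight values of $s$ simultaneously.
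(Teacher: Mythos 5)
There is a genuine gap here, and it sits exactly where you flag it yourself: the period-$3$ ``engine'' is never actually constructed, and the architecture you propose cannot meet the quantitative constraints of the statement. You want three counter vertices $c_0,c_1,c_2$ realizing $c_i(t+1)=c_{i-1}(t)$ via majority, plus four shared helpers, plus eight separate reader vertices $v_s$. That is already $3+4=7$ additional vertices rather than four. Worse, a degree-$3$ majority vertex that copies a single designated neighbour must spend essentially all of its neighbourhood on that task (e.g.\ one input plus one pinned-$0$ and one pinned-$1$ helper), so the $c_i$ have no spare degree to feed the six non-constant readers $v_s$, each of which needs to see two or three of the $c_i$; attaching them pushes the degree of the counter vertices far past $3$. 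The pinned helpers in turn need their own stabilizing partners. Deferring all of this to the gadget in \cite{Goles2014} does not close the gap, because the thing you need --- a rotation on three majority vertices with only four shared auxiliaries, degree $3$ throughout, and room left over for eight readers --- is precisely what has to be exhibited, and your own sketch concedes it is ``the delicate part.''

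The paper avoids the counter/readout split entirely: six of the eight vertices $v_s$ (all $s\notin\{000,111\}$) \emph{are} the rotating structure. They sit, together with two of the four extra vertices, on a degree-$3$ prism over a $4$-cycle (two $4$-cycles joined by rungs), the update blocks are the rungs taken in order, and one application of $F_S$ shifts the stored pattern around the prism so that each vertex individually traverses one of the six non-constant period-$3$ bit sequences. The two constant vertices $v_{000}$ and $v_{111}$ are each paired with one remaining extra vertex in a pinned two-vertex component. No readout gadgets, no pinned-constant helpers for thresholds, and the degree-$3$ and twelve-vertex bounds hold by inspection; the proof is then a finite simulation of three steps. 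If you want to salvage your approach, you would have to either relax the lemma (more auxiliary vertices and higher degree, which would propagate into the degree and block-size bounds of Theorem~\ref{theo:main}) or redesign it so that the $v_s$ themselves carry the rotating data, at which point you have rediscovered the paper's construction.
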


\begin{proof}
This graph is shown in Figure~\ref{fig:TheClock}. The updating scheme is given by the numbering in the figure, and an easy simulation confirms the claim. 
\end{proof}

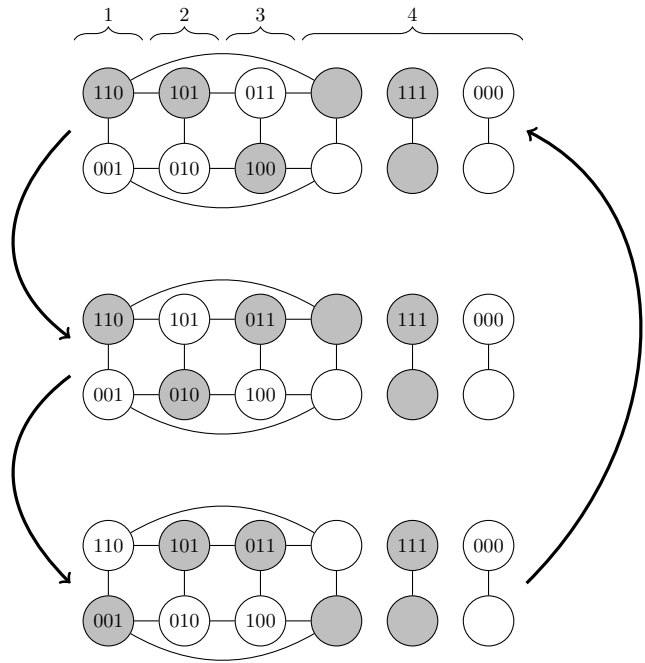
\begin{figure}
\begin{center}
\begin{tikzpicture}[scale=1]

\tikzstyle{every node}=[circle,draw, scale=0.75]

\node (v1) at (0,0) {$001$};
\node[fill=gray!50] (v2) at (0,1) {$110$};
\node (v3) at (1,0) {$010$};
\node[fill=gray!50] (v4) at (1,1) {$101$};
\node[fill=gray!50] (v5) at (2,0) {$100$};
\node (v6) at (2,1) {$011$};
\node (v7) at (3,0) {$\phantom{001}$};
\node[fill=gray!50] (v8) at (3,1) {$\phantom{110}$};
\node[fill=gray!50] (v9) at (4,0) {\phantom{$111$}};
\node[fill=gray!50] (v10) at (4,1) {$111$};
\node (v11) at (5,0) {\phantom{$000$}};
\node (v12) at (5,1) {$000$};

\draw (v1) -- (v3) -- (v5) -- (v7);
\draw (v2) -- (v4) -- (v6) -- (v8);
\draw (v1) -- (v2);
\draw (v3) -- (v4);
\draw (v5) -- (v6);
\draw (v7) -- (v8);
\path (v1) edge [bend right] (v7);
\path (v2) edge [bend left] (v8);
\draw (v9) -- (v10);
\draw (v11) -- (v12);
\draw[decorate,decoration={brace,amplitude=5pt}] (-.45,1.7) -- (.45,1.7);
\draw[decorate,decoration={brace,amplitude=5pt}] (.55,1.7) -- (1.45,1.7);
\draw[decorate,decoration={brace,amplitude=5pt}] (1.55,1.7) -- (2.45,1.7);
\draw[decorate,decoration={brace,amplitude=5pt}] (2.55,1.7) -- (5.45,1.7);
\node[draw=none] () at (0,1.8) [above] {$1$};
\node[draw=none] () at (1,1.8) [above] {$2$};
\node[draw=none] () at (2,1.8) [above] {$3$};
\node[draw=none] () at (4,1.8) [above] {$4$};

\node (v1) at (0,-3) {$001$};
\node[fill=gray!50] (v2) at (0,-2) {$110$};
\node[fill=gray!50] (v3) at (1,-3) {$010$};
\node (v4) at (1,-2) {$101$};
\node (v5) at (2,-3) {$100$};
\node[fill=gray!50] (v6) at (2,-2) {$011$};
\node (v7) at (3,-3) {$\phantom{001}$};
\node[fill=gray!50] (v8) at (3,-2) {$\phantom{110}$};
\node[fill=gray!50] (v9) at (4,-3) {\phantom{$111$}};
\node[fill=gray!50] (v10) at (4,-2) {$111$};
\node (v11) at (5,-3) {\phantom{$000$}};
\node (v12) at (5,-2) {$000$};

\draw (v1) -- (v3) -- (v5) -- (v7);
\draw (v2) -- (v4) -- (v6) -- (v8);
\draw (v1) -- (v2);
\draw (v3) -- (v4);
\draw (v5) -- (v6);
\draw (v7) -- (v8);
\path (v1) edge [bend right] (v7);
\path (v2) edge [bend left] (v8);
\draw (v9) -- (v10);
\draw (v11) -- (v12);

\node[fill=gray!50] (v1) at (0,-6) {$001$};
\node (v2) at (0,-5) {$110$};
\node (v3) at (1,-6) {$010$};
\node[fill=gray!50] (v4) at (1,-5) {$101$};
\node (v5) at (2,-6) {$100$};
\node[fill=gray!50] (v6) at (2,-5) {$011$};
\node[fill=gray!50] (v7) at (3,-6) {$\phantom{001}$};
\node (v8) at (3,-5) {$\phantom{110}$};
\node[fill=gray!50] (v9) at (4,-6) {\phantom{$111$}};
\node[fill=gray!50] (v10) at (4,-5) {$111$};
\node (v11) at (5,-6) {\phantom{$000$}};
\node (v12) at (5,-5) {$000$};

\draw (v1) -- (v3) -- (v5) -- (v7);
\draw (v2) -- (v4) -- (v6) -- (v8);
\draw (v1) -- (v2);
\draw (v3) -- (v4);
\draw (v5) -- (v6);
\draw (v7) -- (v8);
\path (v1) edge [bend right] (v7);
\path (v2) edge [bend left] (v8);
\draw (v9) -- (v10);
\draw (v11) -- (v12);

\draw[->, very thick] (-0.5,0.5) .. controls (-1.5,-0.5) and (-1.5,-1.5) ..  (-0.5,-2.25);
\draw[->, very thick] (-0.5,-2.75) .. controls (-1.5,-3.5) and (-1.5,-4.5) ..  (-0.5,-5.5);
\draw[->, very thick] (5.5,-5.5) .. controls (7.5,-3.5) and (7.5,-0.5) ..  (5.5,0.5);

\end{tikzpicture}
\end{center}
\caption{The clock gadget and its dynamics. Gray vertices are active while the white ones are inactive. The updating scheme is given by the numbering in the top.}
\label{fig:TheClock}
\end{figure}

\begin{lemma}
The problem \mitci{} is \PSPACE-complete, even when restricted to circuits of degree $4$ (in-degree and out-degree $2$).
\end{lemma}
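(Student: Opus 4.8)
The plan is to establish \PSPACE-hardness by reduction from \textsc{Linear-Bounded}, and membership in \PSPACE{} by a direct simulation argument. Membership is straightforward: given a monotone circuit $C : \{0,1\}^n \to \{0,1\}^n$ and input $x$, we iterate $C$ starting from $x$, storing only the current configuration (which takes $n$ bits) and checking at each step whether coordinate $i$ has become $1$. Since there are at most $2^n$ configurations, after $2^n$ steps the orbit must have entered its cycle, so we can halt and reject if we have not yet seen a $1$ in coordinate $i$; a step counter up to $2^n$ needs only $n$ bits. Hence \mitci{} is in \PSPACE.

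For hardness, I would take a linear bounded deterministic Turing machine $M$, an input $w$, and a padding giving space bound $s = K|w|$, and build a monotone circuit $C$ that performs one step of $M$ on a configuration encoded in $\{0,1\}^n$ for $n$ polynomial in $s$. The natural encoding uses, for each of the $s$ tape cells, a block of bits recording the tape symbol at that cell, a bit recording whether the head is on that cell, and (only for the head cell) the current state; one-step transition of a Turing machine is local, so each output bit depends on a constant-size window of input bits. The only genuine obstacle is that circuit bits are monotone (no $\neg$), whereas Turing-machine updates are not monotone. The standard fix is the \emph{dual-rail} (or double-rail) encoding: represent each logical bit $b$ by a pair of circuit wires carrying $b$ and $\bar b$, so that negation becomes a wire swap and both $\wedge$ and $\vee$ of dual-rail inputs can be computed by a monotone gadget. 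Under this encoding the local transition table of $M$ becomes a monotone Boolean function of the dual-rail inputs, realizable by a small monotone circuit. To detect acceptance I would make the encoding ``sticky'': once the state $q_f$ appears, the circuit freezes the configuration (again expressible monotonically in dual-rail form with $q_f$ acting as a latch), and I designate an output coordinate that equals $1$ exactly when $q_f$ is present. Then $M$ accepts $w$ in space $s$ iff iterating $C$ from the encoding of the initial configuration eventually sets that coordinate to $1$, which is precisely a \mitci{} instance.

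It remains to force degree $4$ (in-degree and out-degree $2$). In the circuit built above each gate has bounded fan-in and fan-out, but not necessarily $\le 2$. Fan-in is reduced by replacing a gate of fan-in $k$ with a balanced binary tree of $\wedge$'s (or $\vee$'s) of depth $O(\log k)$, which preserves monotonicity and the computed function; fan-out $k$ is reduced by inserting a balanced binary tree of ``copy'' gates — for a monotone circuit a copy of a wire $a$ is just $a \wedge a$ — again of logarithmic depth. These transformations blow up the size only polynomially (indeed by a $\log$ factor per gate) and keep the circuit monotone, so the resulting instance still has size polynomial in $|M| + s$ and each gate has in-degree and out-degree at most $2$; padding unused inputs/outputs by identity wires makes the circuit square. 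Since \textsc{Linear-Bounded} is \PSPACE-complete and the whole construction is computable in polynomial time, \mitci{} restricted to degree-$4$ circuits is \PSPACE-hard, completing the proof.
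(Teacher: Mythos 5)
Your proof follows essentially the same route as the paper's: reduce \textsc{Linear-Bounded} to the iterated circuit problem via a circuit computing one step of the Turing machine, make it monotone with the dual-rail encoding and De Morgan's laws, and bound the degree by inserting logarithmic-depth trees (the paper delegates this last step to a cited standard construction). The one detail to adjust is your copy gate: realizing a copy of $a$ as $a \wedge a$ consumes two units of the out-degree of $a$ for every copy produced, so the fan-out tree should instead use single-input (identity) $\wedge$-gates, each of in-degree one and out-degree two; with that adjustment the argument is complete.
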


\begin{proof}
First, it is clear that \itci{} is \PSPACE-complete, since the \PSPACE-complete problem \textsc{Linear-Bounded} easily reduces to it. Namely, given a linear bounded Turing machine $M = (Q, \Gamma, \Sigma, \delta, B, q_i, q_f)$, an input $w$ of length $\ell$ and a padding of length $K |w|$, we construct a Boolean circuit with $K \ell \log_2 \vert \Gamma \vert$ inputs for the tape cells, $\log_2 \vert Q \vert$ inputs for the internal state, $\log_2 K \ell$ inputs for the location of the read-write head on the tape, one extra input for signaling the halting state, and some intermediate gates that calculate the next computation step.

To simulate a Boolean circuit by a monotone one, we replace each input $v$ by two inputs $v_+$ and $v_-$, so that $v_+$ is true if and only if $v$ is true, and $v_-$ is true if and only if $v$ is false; similarly, we replace each gate by two gates, a gate computing the value of the original gate, and another gate computing the negation. A $\neg$-gate is simulated by swapping the two inputs, and we use De Morgan's laws to obtain the substitutions for the monotone gates:
\[ u \wedge v \mapsto (u_+ \wedge v_+, u_- \vee v_-), \quad u \vee v \mapsto (u_+ \vee v_+, u_- \wedge v_-). \]
It is clear that the $v_+$-gates of the monotone circuit behave exactly the gates of the original one. Finally, using the construction in \cite[Theorem~6.2.3]{limitsofpara}, we can transform the monotone circuit into one where the in-degree and out-degree of every gate is bounded by $2$.
\end{proof}

Now we define a more powerful model of majority dynamics on Boolean graphs, and show that the related problem is \PSPACE-complete.

\begin{definition}
A \emph{clocked automata network} is an automata network graph $G$, whose every node $v \in V(G)$ is associated with an element $c(v) \in \{U, 1, 0\}^3$, called its \emph{clock}. The elements $U$, $1$ and $0$ are just labels that refer to `update normally,' `switch to $1$,' and `switch to $0$,' respectively. The dynamics of a clocked Boolean graph is the same as that of the underlying automata network, except that on a timestep $t \in \N$, if the symbol $c(v)_{t \bmod 3}$ of the clock is $1$ or $0$, instead of taking the majority function of its neighborhood, the vertex $v$ simply assumes the respective state.

The decision problem associated to clocked Boolean graphs with majority dynamics and block-sequential updates is denoted \clma{}.
\end{definition}



\begin{lemma}
\label{lemma:clmaPSPACE}
The problem \clma{} is \PSPACE-Complete, even when restricted to graphs of degree $7$ with a sequential update scheme.
\end{lemma}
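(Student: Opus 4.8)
The plan is to prove membership and \PSPACE-hardness separately. Membership in \PSPACE{} is immediate: a clocked automata network on a graph $G$ has at most $3\cdot 2^{|V(G)|}$ relevant \emph{phased configurations} (a configuration together with the value of $t \bmod 3$), so one can decide whether the queried vertex ever becomes active by simulating the dynamics while storing only the current phased configuration and a step counter bounded by $3\cdot 2^{|V(G)|}$, which uses space linear in $|V(G)|$. Equivalently, the problem is a reachability question in an exponentially large, succinctly presented graph.

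For hardness I would reduce from \mitci{} restricted to circuits of degree $4$, which is \PSPACE-complete by the previous lemma. Given a monotone circuit $C:\{0,1\}^n\to\{0,1\}^n$ with in- and out-degree at most $2$, an input $x\in\{0,1\}^n$, and an index $i$ (we may assume $x_i=0$, else the instance is a trivial ``yes''), the aim is to build a clocked majority network, a sequential update scheme, and an initial configuration whose dynamics faithfully iterates $C$, with a distinguished vertex that becomes active exactly when $C^t(x)_i=1$ for some $t$. The backbone is the network $\mathcal{M}$ produced by Lemma~\ref{lem:Network} applied to $C$: it has degree at most $2\cdot 4-1=7$, one application of its sequential global transition function evaluates $C$, and it exposes input vertices $v_1,\dots,v_n$ and output vertices $w_1,\dots,w_n$.

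The core of the reduction, and the only place the clock is really needed, is turning this one-shot evaluator into an iterator. I would use the three clock phases so that one phase evaluates the combinational part of $\mathcal{M}$ on the current input values, a second phase latches the resulting outputs and feeds them back onto the inputs, and the third phase restores the ``padding'' and internal wiring vertices of the gadgets of Figure~\ref{fig:Gates} to the clean state in which those gadgets correctly recompute $C$; the constant vertices of the gadgets are forced to keep their values by giving them clock symbol $0$ or $1$ in every phase, and the reset is likewise implemented directly with the clock symbols $0$ and $1$. The bit currently being iterated is kept, between its two refreshes, in small majority ``latch'' gadgets attached to the output and input vertices --- for instance a mutually adjacent triple of vertices, a monochromatic state of which is a fixed point of the sequential majority rule --- into which a sufficiently redundant feedback signal is written so as to outvote the latch's own inertia. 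One then fixes a single sequential update scheme, topological on the combinational part during the evaluation phase, updating each latch after its source output vertex and the input vertices last, and checks that this one order behaves correctly in all three phases. The initial configuration sets the input-side latches according to $x$, puts every gadget into its clean state, and leaves all remaining vertices at $0$; the distinguished vertex is the latch attached to $w_i$, so that its value at the appropriate times traces $C^t(x)_i$, and it becomes active for some $t$ iff the \mitci{} instance is positive. Degree $7$ is preserved because the vertices inherited from $\mathcal{M}$ already obey this bound, the vertices $v_j$ and $w_j$ arise from sources and sinks of $C$ and are therefore low-degree in $\mathcal{M}$ and only acquire a few feedback edges, and every latch/feedback vertex has only a handful of neighbours.

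The step I expect to be the main obstacle is making the iteration genuinely faithful within the degree budget: at the start of every evaluation phase the combinational part must be in exactly the clean state assumed by Lemma~\ref{lem:Network}, while the latched bits must survive the reset phase untouched, and the tension between ``the feedback must be strong enough to overwrite the latch'' and ``the same feedback must not perturb the latch while the bit is merely being held'' must be resolved while keeping every gadget of degree at most $7$. A secondary technical point is to exhibit one fixed sequential order that is simultaneously correct in all three clock phases.
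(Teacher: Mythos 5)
Your skeleton matches the paper's: membership by simulation, hardness by reducing from \mitci{} restricted to degree $4$, with Lemma~\ref{lem:Network} supplying a degree-$7$ one-shot evaluator and the three clock phases supplying ``evaluate / feed back / reset.'' But the one piece of machinery you introduce that the paper does not --- majority \emph{latches} that hold a bit by inertia between refreshes and must be ``outvoted'' by a redundant feedback signal --- is exactly where your argument is incomplete, and you say so yourself. The tension you flag is real and is not resolvable by update order alone: for a triangle latch of degree $d$ to be overwritten, more than $d/2$ of its neighbours must carry the new value, so the feedback wires dominate the latch's own inertia; but then whenever those same wires carry stale or transient values during the holding and reset phases, they overwrite the latch anyway. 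Forcing the wires to $0$ with the clock during those phases does not help, since an all-$0$ feedback can never write a $1$ into the latch under strict majority. As it stands, no vertex in your construction can both remember a bit for three phases and accept a new bit once per cycle, so the reduction's correctness is not established.

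The paper avoids latches entirely: \emph{no} vertex ever holds a value against its majority function. Every data-carrying vertex is clock-forced to $0$ in two of the three phases and updates normally in exactly one, and the current iterate physically circulates around a cycle of wire levels (input row $\to$ buffer row $\to$ circuit levels $\to$ back to the input row via wrap-around edges), advancing one stage per phase; within the single phase in which the circuit levels are live, the sequential order evaluates all $k$ layers in one global step. Signal propagation along these degree-$3$ wires under strict majority is enabled by attaching to each wire vertex a small companion with clock $(1,1,1)$, i.e.\ permanently active, so that a vertex with one live predecessor plus its companion has $2$ of $3$ neighbours active. This is the idea your proposal is missing: rather than engineering a memory element that selectively resists or yields to its neighbourhood, let the clock zero out everything except the one row that is currently being written, and store the bit ``in motion'' around the feedback loop. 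I would rework your construction along these lines; with the latches removed, the degree-$7$ bound and the single sequential order valid in all three phases both become routine to check.
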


\begin{proof}
We reduce \mitci{} to the clocked majority problem. See Figure~\ref{fig:ClockedLayers}. Updating is done in the order of the numbering on the left, and within a level, the ordering of nodes can be arbitrary, except that the Boolean circuit $C$ (represented by the ellipse and transformed into a Boolean graph with $k$ update blocks using Lemma~\ref{lem:Network}) is updated in its natural order, that is, inputs of a gate are updated before the gate. We can assume that $C$ has degree $4$, so that the clocked network has degree at most $7$ by Lemma~\ref{lem:Network} and the construction.

The small gray vertices have the clock $(1, 1, 1)$, so they will always be active. Other vertices on levels $1$ and $4+k$ have clock $(0, 0, U)$, the nodes on levels $3, 4, \ldots, 4+k-1$ have clock $(U, 0, 0)$ and those on level $2$ have clock $(0, U, 0)$. It is easy to check that in three steps, if we begin with the top and bottom rows containing an input vector $x \in \{0,1\}^n$ and the other rows being empty, we return to the same configuration except with the top and bottom rows containing the string $C(x)$.
\end{proof}

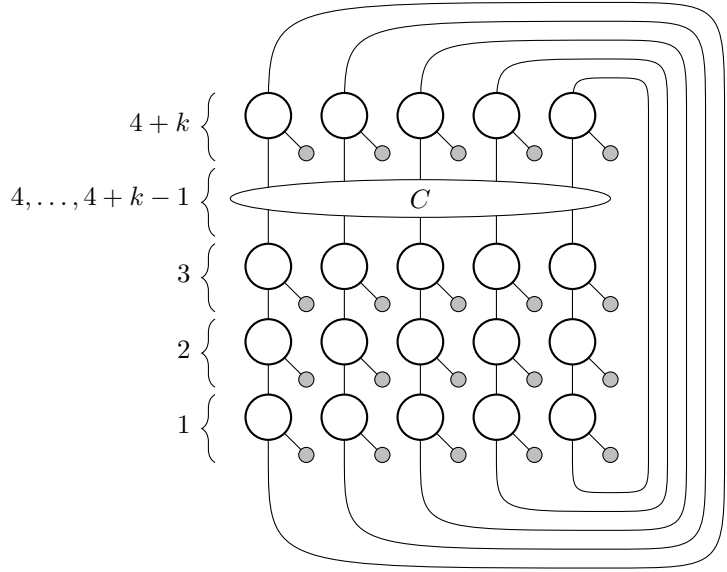
\begin{figure}
\begin{center}
\begin{tikzpicture}

\tikzstyle{vertex}=[circle,thick,draw,fill=white,minimum size=6mm]

\foreach \x in {0,1,...,4}{
	\draw (\x,-.5) -- (\x,2);
	\draw (\x,4) .. controls (\x,5.5-\x/4) .. (4.5,5.5-\x/4) .. controls (6-\x/4,5.5-\x/4) .. (6-\x/4,4) -- (6-\x/4,-.5) .. controls (6-\x/4,\x/4-2) .. (4.5,\x/4-2) .. controls (\x,\x/4-2) .. (\x,-.5);
	\draw (\x,4) -- (\x,2);
	\foreach \y in {0,1,2,4}{
		\draw (\x,\y) -- (\x+.5,\y-.5);
		\node[vertex] () at (\x,\y) {};
		\draw[fill=gray!50] (\x+.5,\y-.5) circle (0.1);
	}
}

\draw[fill=white] (2,2.9) ellipse (2.5 and 0.25);
\node () at (2,2.9) {$C$};

\draw[decorate,decoration={brace,amplitude=5pt}] (-.7,-.6) -- (-.7,.3);
\draw[decorate,decoration={brace,amplitude=5pt}] (-.7,.4) -- (-.7,1.3);
\draw[decorate,decoration={brace,amplitude=5pt}] (-.7,1.4) -- (-.7,2.3);
\draw[decorate,decoration={brace,amplitude=5pt}] (-.7,2.4) -- (-.7,3.3);
\draw[decorate,decoration={brace,amplitude=5pt}] (-.7,3.4) -- (-.7,4.3);

\node () at (-.9,-.1) [left] {$1$};
\node () at (-.9,.9) [left] {$2$};
\node () at (-.9,1.9) [left] {$3$};
\node () at (-.9,2.9) [left] {$4, \ldots, 4+k-1$};
\node () at (-.9,3.9) [left] {$4+k$};

\end{tikzpicture}
\end{center}
\caption{The clocked network, shown here with input length $5$. The small gray vertices will always be active, and the circuit $C$ computes from top to bottom. The number of update blocks in the circuit is $k$. Note how the upper edges reappear cyclically at the bottom.}
\label{fig:ClockedLayers}
\end{figure}

We show now that majority networks can simulate clocked ones. First, we prove the following lemma, which states that using a linear blow-up in the number of vertices, we can make the dynamics of a majority network robust to adding a constant number of vertices to it.

\begin{lemma}
Let $\mathcal{A} = (G, (f_v)_{v \in V(G)})$ be an automata network with a block sequential update scheme $S$, where every the degree of every vertex in $G$ is odd and bounded by $d \in \N$, and let $k \in \N$. Then there exists an automata network $\mathcal{A}^k = (G^k, (\tilde{f}_v)_{v \in V(G^k)})$ of size $(2k + 1)|V(G)|$ and degree at most $(2k + 1)d$ with a block sequential update scheme $S^k$, and a $(2k+1)$-to-one function $\phi : V(G^k) \to V(G)$ such that the following holds. For all $x \in \{0,1\}^{V(G)}$, let $y \in \{0,1\}^{V(G^k)}$ be such that $y_v = x_{\phi(v)}$ for all $v \in V(G^k)$. Then for all $v \in V(G^k)$ and all $t \geq 0$, we have $F_S^t(x)_{\phi(v)} = F_{S^k}^t(y)_v$, even if $G^k$ is an induced subgraph of any graph $H$ of an automata network $\mathcal{H}$ with $|N_H(v) \setminus G^k| \leq k$ for all $v \in V(G^k)$.
\end{lemma}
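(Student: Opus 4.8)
The plan is to build $G^k$ from $2k+1$ disjoint copies of the vertices of $G$, glued together along the edges of $G$. Formally, I would put $V(G^k)=V(G)\times\{0,\dots,2k\}$, write $v^{(i)}$ for $(v,i)$, and for every edge $\{u,v\}\in E(G)$ insert all $(2k+1)^2$ edges $\{u^{(j)},v^{(i)}\}$ into $E(G^k)$ (adding no edges between copies of the same vertex). Set $\phi(v^{(i)})=v$, which is $(2k+1)$-to-one, and note $|V(G^k)|=(2k+1)|V(G)|$ and $\deg_{G^k}(v^{(i)})=(2k+1)\deg_G(v)\le(2k+1)d$. Let $S^k(v^{(i)})=S(v)$, so that all copies of a vertex lie in the same block and update simultaneously, and let the rule at each node of $G^k$ again be the majority rule (for a general rule $f_v$ one would instead let $\tilde f_{v^{(i)}}$ apply $f_v$ to the common value of each copy-block of $N_G(v)$ and ignore all other coordinates; the argument is unchanged).

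The technical heart is a \emph{counting claim}, which is exactly where oddness of the degrees is used. Fix a vertex $v$ and a state assignment in which, for every $u\in N_G(v)$, all $2k+1$ copies $u^{(0)},\dots,u^{(2k)}$ carry a common value, and let $a$ be the number of those $u$ whose common value is $1$. Suppose $v^{(i)}$ has, in addition to its $(2k+1)\deg_G(v)$ neighbours of this form, at most $k$ further neighbours in arbitrary states. Then the majority rule evaluated at $v^{(i)}$ outputs $1$ if $2a>\deg_G(v)$ and $0$ if $2a<\deg_G(v)$, i.e.\ it outputs $f_v$ of the original states, no matter what the extra neighbours do. Indeed, since $\deg_G(v)$ is odd, $2a-\deg_G(v)$ is a nonzero odd integer, so $|2a-\deg_G(v)|\ge1$; multiplying the real neighbours' contribution by $2k+1$ therefore beats the threshold by a margin of at least $2k+1$, which strictly dominates both the at most $k$ extra active neighbours in the numerator and the at most $k$ extra slots in the denominator. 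A short case check confirms this: in the majority-$1$ case the extra neighbours can only push further over the threshold, while in the majority-$0$ case the number of active neighbours of $v^{(i)}$ is at most $(2k+1)a+k$, and $2\bigl((2k+1)a+k\bigr)\le(2k+1)(\deg_G(v)-1)+2k=(2k+1)\deg_G(v)-1<|N(v^{(i)})|$, so $v^{(i)}$ never reaches a strict majority; the tie-break rule is irrelevant in both directions.

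With the claim in hand, the statement follows by a straightforward induction over the sub-steps of a block-sequential round and then over rounds. The invariant is: after applying $F^k_\ell\circ\cdots\circ F^k_1$ to the initial configuration $y$ of $G^k$, every copy $v^{(i)}$ carries the value that $v$ has after applying $F_\ell\circ\cdots\circ F_1$ to $x$ in $G$; in particular all copies of each vertex stay equal. The base case is the hypothesis $y_{v^{(i)}}=x_{\phi(v^{(i)})}$. For the inductive step, copies of vertices outside block $\ell$ are unchanged and remain correct, while for $v$ in block $\ell$ the invariant for the previous sub-configuration says each $u\in N_G(v)$ has all its copies at the common (correct) value, so the counting claim applies and updates every copy $v^{(i)}$ to $f_v(\cdot)=F_\ell(\cdots)(x)_v$. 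Iterating gives $F_{S^k}^t(y)_{v^{(i)}}=F_S^t(x)_v$ for all $t\ge0$. The robustness clause is obtained by running the very same induction inside $\mathcal{H}$: since $G^k$ is an induced subgraph of $H$, each $v^{(i)}$ has at most $k$ neighbours outside $V(G^k)$, and whatever states those neighbours take at any sub-step, the counting claim still forces $v^{(i)}$ to its correct value, so the invariant, hence the displayed equality, is preserved verbatim for the dynamics of $\mathcal{H}$ projected to $V(G^k)$.

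The one point that needs care, and which I expect to be the main obstacle, is precisely the counting claim together with the observation that it keeps all $2k+1$ copies of a vertex synchronized even though, inside $\mathcal{H}$, different copies may be attached to different external neighbours carrying different states; once that is pinned down, the rest is routine block-sequential bookkeeping. Concretely, one must track that the extra neighbours contribute both to the count of active neighbours and to the size of the neighbourhood (and hence to the threshold), and verify that the margin $2k+1$ still wins in the worst case where all $k$ of them are active, which is exactly what the case check above does.
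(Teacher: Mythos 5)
Your construction is identical to the paper's ($2k+1$ copies of each vertex, all $(2k+1)^2$ edges between copies of adjacent vertices, $\phi$ and $S^k$ induced by projection), and your counting claim correctly supplies the margin argument that the paper dismisses with ``the claim easily follows.'' The proposal is correct and takes essentially the same approach.
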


The automata network $\mathcal{A}^k$ is called the \emph{$k$-amplification} of $\mathcal{A}$.

\begin{proof}
The amplification $\mathcal{A}^k$ has vertex set $V(G^k) = V(G) \times \{1, 2, \ldots, 2k+1\}$ and edge set
\[ E(G^k) = \{ \{(v,i), (w,j)\} \;|\; \{v, w\} \in E(G), i,j \in \{1,2,\ldots,2k+1\} \}. \]
The function $\phi$ and the updating scheme $S$ are simply defined by $\phi((v,i)) = v$ and $S^k((v,i)) = S(v)$ for all $(v,i) \in V(G^k)$. The claim easily follows.
\end{proof}

See Figure~\ref{fig:Amplify} for an example of the amplification operation.

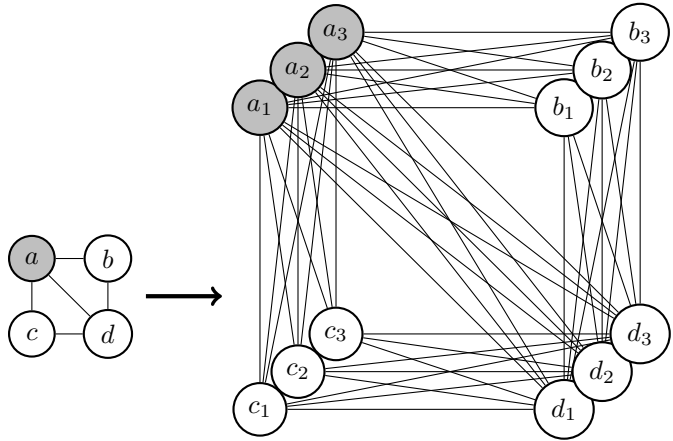
\begin{figure}
\begin{center}
\begin{tikzpicture}

\tikzstyle{every node}=[circle,thick,draw,fill=white,minimum size=6mm]

\node (v1) at (0,0) {$c$};
\node (v2) at (1,0) {$d$};
\node[fill=gray!50] (v3) at (0,1) {$a$};
\node (v4) at (1,1) {$b$};
\draw (v2) -- (v1) -- (v3) -- (v4) -- (v2) -- (v3);

\draw[ultra thick,->] (1.5,.5) -- (2.5,.5);

\foreach \i in {1,2,3}{
	\node (v1\i) at (2.5+\i/2,\i/2-1.5) {$c_\i$};
	\node (v2\i) at (6.5+\i/2,\i/2-1.5) {$d_\i$};
	\node[fill=gray!50] (v3\i) at (2.5+\i/2,\i/2+2.5) {$a_\i$};
	\node (v4\i) at (6.5+\i/2,\i/2+2.5) {$b_\i$};
}
\foreach \i in {1,2,3}{
	\foreach \j in {1,2,3}{
		\draw (v2\i) -- (v1\j) -- (v3\i) -- (v4\j) -- (v2\i) -- (v3\j);
	}
}

\end{tikzpicture}
\end{center}
\caption{The amplification operation applied to a simple majority network with $k = 1$.}
\label{fig:Amplify}
\end{figure}



Let $\mathcal{BS}_k$ be the family of block sequential updating schemes with block size smaller than $k$. 

\begin{theorem}
\label{theo:main}
$\mathcal{BS}_8$-\textsc{Majority} is \PSPACE-complete, even when restricted to graphs of degree at most $23$.
\end{theorem}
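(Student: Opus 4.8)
The plan is to assemble the three ingredients developed in this section into a single reduction from \mitci{} (with degree-$4$ circuits) to $\mathcal{BS}_8$-\textsc{Majority}. The overarching idea is: clocked majority networks of degree $7$ with sequential update already capture \mitci{} by Lemma~\ref{lemma:clmaPSPACE}, so it suffices to show that an ordinary majority network can \emph{simulate} a clocked one while keeping all blocks of size at most $7$ and the degree at most $23$. The clock mechanism is exactly what the three-state periodic gadget of Lemma~\ref{lem:Clock} provides: each vertex $v$ with clock $c(v) \in \{U,1,0\}^3$ should receive, as an extra neighbor, the output $v_s$ of a copy of the clock gadget for the appropriate $s \in \{0,1\}^3$ (with the convention that $U$ contributes nothing forcing and $1/0$ forces the corresponding state). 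Forcing a vertex into state $1$ or $0$ on a given timestep is done by attaching a small constant-size "forcing block" of always-active or always-inactive auxiliary vertices (fed by the clock) that, when the clock symbol is $1$ (resp.\ $0$), tip the strict majority at $v$ to $1$ (resp.\ $0$) regardless of $v$'s other neighbors; when the symbol is $U$ these auxiliary vertices are present in equal active/inactive numbers so the majority of the original neighborhood is untouched.

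The key technical step is degree and block-size control. Attaching forcing gadgets and a clock output to every vertex would raise degrees and, more seriously, break the robustness of the already-constructed degree-$7$ clocked network, since that network's vertices were designed assuming a specific neighborhood. This is where the $k$-amplification lemma is used: we first $k$-amplify (for a suitable small constant $k$, here $k=1$) the clocked network's underlying majority graph so that its dynamics become provably robust to adding up to $k$ foreign neighbors per vertex; then we are free to splice in the clock gadget and the constant-size forcing blocks as those foreign neighbors. Amplification multiplies the degree by $2k+1=3$, taking $7$ to $21$, and the forcing/clock attachments add a bounded constant ($\le 2$) more, landing at $\le 23$. For the block-size bound: the clock gadget of Lemma~\ref{lem:Clock} uses blocks of size $2$, the original clocked network uses a sequential scheme (blocks of size $1$), and after amplification each original block of size $1$ becomes a block of size $2k+1=3$; interleaving these with the forcing-gadget updates in the right order over the three-step cycle, every block stays below $8$. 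One must check the interleaving carefully: within one period of length $3$, the clock vertices must update, then the forcing blocks (so they reflect the current clock symbol), then the amplified circuit vertices in their natural order, and the order must be compatible with a single block-sequential scheme $S$ whose blocks all have size $< 8$.

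The main obstacle I expect is precisely this scheduling verification: showing that there is one fixed block-sequential update scheme $S$ on the combined graph $G$ such that (i) over each window of $3$ timesteps the clock sub-gadget reproduces the periodic sequence of Lemma~\ref{lem:Clock}, (ii) on steps where $c(v)_{t\bmod 3}\in\{0,1\}$ the forcing block has already been refreshed before $v$ updates so that $v$ is genuinely forced, (iii) on steps where $c(v)_{t\bmod 3}=U$ the forcing block contributes a tie and the amplified subgraph evolves exactly as the clocked network would, and (iv) all of this respects the amplification lemma's hypothesis that each amplified vertex sees at most $k$ foreign neighbors. Items (ii)--(iii) force the forcing block to sit in an update block strictly between the clock block and the circuit blocks, and one must confirm that this does not inflate any block past size $7$; the worst case is a circuit gate of degree $4$, whose amplified copy has degree $12$ and lies in a block of size $3$, plus at most a couple of forcing neighbors --- comfortably under $8$ for blocks and under $23$ for degree. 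Once the scheme is fixed, correctness is a routine induction on $t$: assuming the state at time $t$ agrees with the clocked network on the image of $\phi$ and has the clock and forcing gadgets in their canonical configuration, one checks the three sub-phases of step $t+1$ reproduce both. Finally, \PSPACE-membership is immediate --- the transient plus period is bounded by $2^{|V|}$, so configurations can be enumerated in polynomial space --- and \PSPACE-hardness follows from the chain \textsc{Linear-Bounded} $\le$ \itci{} $\le$ \mitci{} $\le$ \clma{} $\le$ $\mathcal{BS}_8$-\textsc{Majority}, with the last reduction being the construction just described.
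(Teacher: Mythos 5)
Your overall architecture matches the paper's: reduce \clma{} (degree $7$, sequential scheme) to the majority problem by attaching copies of the clock gadget of Lemma~\ref{lem:Clock} that force or release each vertex according to its clock word, and invoke the amplification lemma for robustness. However, there is a genuine quantitative gap in the forcing step, and it stems from applying amplification to the wrong object. You $1$-amplify the \emph{clocked circuit}, raising its degree to $21$, and then claim that a constant-size forcing block of at most $2$ clock-driven neighbors can ``tip the strict majority at $v$ \dots regardless of $v$'s other neighbors.'' This is impossible: to force a vertex with $D$ uncontrolled neighbors into a prescribed state, the attached forcing neighbors must strictly outnumber $D$, so that they win even when all $D$ others vote the other way. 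With $D = 21$ you would need at least $22$ forcing neighbors, which both blows the degree bound past $23$ and exceeds the $k=1$ robustness budget your amplification provides. Since the forcing phases (e.g.\ the clocks $(0,0,U)$ that reset the circuit levels each period) are essential to the simulation in Lemma~\ref{lemma:clmaPSPACE}, the construction as described does not work.

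The paper's proof inverts the roles: the circuit is \emph{not} amplified and keeps degree $d = 7$, while the \emph{clock gadget} attached to each vertex $v$ is $\lceil d_v/2\rceil$-amplified. This serves two purposes at once: it makes the clock robust to the one foreign neighbor ($v$ itself) that each of its output copies receives, and it supplies at least $d_v+1$ interchangeable copies of each output vertex. One then attaches to $v$ the $d_v+1$ copies of $c^{\downarrow}(v)$ and the $d_v+1$ copies of $c^{\uparrow}(v)$: in a $U$ phase these split into $d_v{+}1$ active versus $d_v{+}1$ inactive neighbors and cancel, leaving the original majority decision intact, while in a $1$ (resp.\ $0$) phase all $2(d_v+1) > d_v$ of them are active (resp.\ inactive) and overwhelm the original neighborhood. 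The degree of the original vertices is then $d + 2(d+1) = 3d+2 = 23$. To salvage your write-up, drop the amplification of the circuit entirely and redo the forcing arithmetic along these lines; the scheduling concerns you raise are then handled exactly as you describe, with the (amplified) clocks updated in their own small blocks before the circuit levels.
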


\begin{proof}
We reduce \clma{} to said problem. Given a clocked majority automata network $\mathcal{A}$ with underlying graph $G$ of degree $d \in \N$, we construct a majority automata network $\mathcal{B}$ from it as follows. First, we attach a $\lceil d_v/2 \rceil$-fold amplified copy $\mathcal{M}^{\lceil d_v/2 \rceil}$ of the clock network $\mathcal{M}$ of Lemma~\ref{lem:Clock} to every vertex $v \in V(G)$, where $d_v \in \N$ is the degree of $v$. Denote this subgraph by $\mathcal{C}_v$. Now, $\mathcal{C}_v$ contains at least $d_v+1$ copies of every vertex $v_s$ in $\mathcal{M}$, any of which we can attach to $v$ without affecting the behavior of $\mathcal{C}_v$. Suppose that the clock of $v$ is $c(v) \in \{0, 1, U\}^3$, and let $c^\downarrow(v)$ and $c^\uparrow(v)$ be the elements of $\{0, 1\}^3$ obtained by replacing every $U$ in $c(v)$ by $0$ or $1$, respectively. Then we attach $d_v+1$ copies of the vertices $c^\downarrow(v)$ and $c^\uparrow(v)$ in $\mathcal{C}_v$ to $v$; if $c^\downarrow(v) = c^\uparrow(v)$, we attach them only once. For example, if the clock of $v$ is $(1, U, 0)$, then we can attach it to $d_v+1$ copies of the vertex $100$, and $d_v+1$ copies of $110$. It is clear from the properties of the clock gadget that the resulting majority automata network $\mathcal{B}$ behaves like the clocked majority automata network $\mathcal{A}$. Furthermore, if the degree of $G$ is bounded by $d$ and its update scheme is sequential, and if we update the clocks one at a time, then the graph of $\mathcal{B}$ has degree at most $3d+2$ and block size at most $d+1$. The claim follows, since we can choose $d = 7$.
\end{proof}


Since $\mathcal{BS}_k$ is a subset of the whole family of block sequential updating schemes, we obtain as a corollary that \bsma{} is \PSPACE-Complete.

\begin{corollary}
\bsma{} is \PSPACE-complete.
\end{corollary}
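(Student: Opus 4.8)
The plan is to verify the two halves of \PSPACE-completeness separately, both of which come essentially for free. For \PSPACE-hardness, I would observe that the family $\mathcal{BS}_8$ of block sequential updating schemes with bounded block size is contained in the family of \emph{all} block sequential updating schemes; hence every instance of $\mathcal{BS}_8$-\textsc{Majority} is literally an instance of \bsma{}, with the same yes/no answer. The identity map is therefore a (trivial) polynomial-time reduction from $\mathcal{BS}_8$-\textsc{Majority} to \bsma{}, and composing it with the reduction supplied by Theorem~\ref{theo:main} shows that \bsma{} is \PSPACE-hard.

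For membership in \PSPACE, I would spell out the naive simulation algorithm and bound its space usage. Given a majority network $\mathcal{A}$ over a graph with $n = |V|$ vertices, an updating scheme $S$, an initial configuration $x$, and a target vertex $v$, the configuration space $\{0,1\}^V$ has size $2^n$, so the trajectory $T_S(x)$ is eventually periodic after at most $2^n$ steps. The algorithm maintains the current configuration $x(t)$, which is $n$ bits, together with a binary counter for $t$, which takes at most $n$ bits; it repeatedly overwrites $x(t)$ by $F_S(x(t)) = x(t+1)$ — a computation that, since $F_S = F_n \circ \cdots \circ F_1$ and each $F_k$ merely evaluates a majority function at each vertex, runs in time, and hence space, polynomial in $n$ — and checks whether the $v$-coordinate has become $1$. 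If after $2^n$ steps no such configuration has appeared, the trajectory has already closed its loop and the answer is ``no''. The whole procedure uses only polynomially many bits, so $\bsma{} \in \PSPACE$.

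Combining the two parts yields that \bsma{} is \PSPACE-complete. There is no real obstacle here: the hardness is inherited directly from the stronger statement of Theorem~\ref{theo:main}, since restricting the admissible updating schemes can only make the problem easier, and the upper bound is the standard observation that bounded-space reachability in an exponentially large but succinctly described transition system lies in \PSPACE.
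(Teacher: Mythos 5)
Your proposal is correct and follows essentially the same route as the paper: hardness is inherited trivially from Theorem~\ref{theo:main} because $\mathcal{BS}_8$ is a subfamily of all block sequential schemes, and membership in \PSPACE{} is the standard step-by-step simulation with an exponential step counter stored in polynomially many bits. The paper states only the inclusion argument explicitly and leaves the membership part implicit, so your write-up is just a slightly more detailed version of the same proof.
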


We have shown that the problem \bsma{} is \PSPACE-complete when the maximum block size is constant with respect to the size of the majority network. In the following, we will show that if we limit the total number of blocks instead of their size, then the problem is still \PSPACE-Complete.


We start by showing that in our circuit simulations, it is enough to deal with circuits of depth $1$, that is, circuits with just two layers, input and output. Consider the following decision problem.

\begin{framed}
\noindent \kdmitci: Given a depth-$1$ monotone Boolean circuit $C : \{0,1\}^n \to \{0,1\}^n$, an input string $x \in \{0,1\}^n$ and an index $i \in \{1, \ldots, n\}$, decide if whether there exists $t \in \N$ such that $C^t(x)_i = 1$. 
\end{framed}

We show that this problem is \PSPACE-complete. 

\begin{lemma}
\kdmitci{} is \PSPACE-complete, even when restricted to circuits of degree $4$.
\label{lem:kdepth}
\end{lemma}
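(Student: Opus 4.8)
The plan is to reduce the general iterated monotone circuit problem \mitci{} (known to be \PSPACE-complete by the earlier lemma, even for degree-$4$ circuits) to \kdmitci{}. The key idea is a standard ``unrolling into space'' trick: given a monotone circuit $C : \{0,1\}^n \to \{0,1\}^n$ with gates arranged in $L$ layers, we build a depth-$1$ circuit $C' : \{0,1\}^N \to \{0,1\}^N$ whose coordinates carry one Boolean value for \emph{each gate and input wire of $C$} (so $N$ is polynomial in $|C|$), and whose single transition step copies the value of each wire from the wire feeding it one layer below. Concretely, if $C$ has layers $0$ (inputs) through $L$, then $C'$ has coordinate blocks $B_0, B_1, \ldots, B_L$, each a copy of the wire set of $C$; one application of $C'$ shifts the ``computation front'' up by one layer, evaluating each layer-$\ell$ gate from the already-present values in block $B_{\ell-1}$ and writing the result into $B_\ell$, while block $B_0$ reloads the value of $C$ applied to the contents of block $B_L$. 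Each output coordinate of $C'$ is thus an $\wedge$, $\vee$, or identity of coordinates from the \emph{previous} block, so $C'$ genuinely has depth $1$ and is monotone.

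The next step is to check the dynamics. If we initialise $C'$ with input $x$ placed in block $B_0$ and $0$ everywhere else, then after $\ell$ steps block $B_\ell$ contains the layer-$\ell$ wire values of $C(x)$ and, after $L$ steps, block $B_L$ contains $C(x)$ while block $B_0$ simultaneously reloads $C(C(x))$; inductively, after $mL$ steps block $B_0$ holds $C^{m}(x)$. Hence $C^t(x)_i = 1$ for some $t$ in the original circuit if and only if the corresponding output-wire coordinate of $C'$ in block $B_L$ (or equivalently the reload coordinate in $B_0$) takes value $1$ at some time step; picking the index $i'$ of that coordinate completes the reduction, which is clearly polynomial-time computable. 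One must also arrange the reduction so that the transition of $C'$ restricted to each coordinate has in-degree and out-degree at most $2$ — but this follows exactly as in the previous lemma: apply the degree-reduction construction of \cite[Theorem~6.2.3]{limitsofpara} (or simply split fan-in and fan-out greedily using a constant number of extra identity/pass-through coordinates, which only changes $N$ by a polynomial factor and does not increase depth, since pass-through nodes can be absorbed into the block structure) to guarantee degree $4$.

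Membership of \kdmitci{} in \PSPACE{} is immediate: the sequence $(C^t(x))_{t}$ lives in $\{0,1\}^n$, is eventually periodic, and can be generated one configuration at a time in polynomial space while we watch coordinate $i$; \PSPACE-hardness then follows from the reduction above, giving completeness. The main obstacle I anticipate is purely bookkeeping rather than conceptual: one has to be careful that the degree bound survives the two simultaneous demands — depth exactly $1$ and fan-in/fan-out at most $2$ — because the ``reload'' coordinates of block $B_0$ naturally read from many coordinates of block $B_L$ (namely all of $C$ compressed into one step). The fix is that $C$ itself, in its degree-$4$ form, already has bounded fan-in per gate, and the reload step is nothing but one layer of that circuit applied to $B_L$; so the only genuinely high-degree objects are the fan-outs of individual wires, which are tamed once and for all by the standard balanced-tree degree reduction, at the cost of extra intermediate coordinates that we fold into additional (still depth-$1$) blocks. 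With that in place the verification of the dynamics is routine.
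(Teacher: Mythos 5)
Your proposal is essentially the paper's own argument: normalize $C$ to a synchronous layered circuit, give the depth-$1$ circuit one coordinate per gate of $C$, let each application advance the computation front by one layer (all other blocks staying zero, since monotone gates with all-zero inputs output zero), and wrap the last layer back onto the input block, so that the target coordinate equals $C^t(x)_i$ at times $tL$ and $0$ otherwise. Two small slips worth fixing: block $B_0$ cannot hold $C(C(x))$ at time $L$ in a depth-$1$ circuit — under the wrap-around it simply receives $C(x)$, and the induction runs with period $L$ — and your worry about the ``reload'' coordinates having large fan-in is unfounded, since the reload is nothing more than the last layer of $C$ evaluated on $B_{L-1}$, so in- and out-degrees are preserved exactly and no extra degree-reduction (with its attendant desynchronization issues) is needed once you start from a degree-$4$ instance of \mitci{}.
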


\begin{proof}

This problem is clearly in \PSPACE. We will show that we can reduce \mitci{} to \kdmitci{} in polynomial time, in a way that preserves the maximum in-degrees and out-degrees of vertices.
Let $(C,x,i)$ an instance of \mitci{}, where $C : \{0 ,1\}^n \to \{0, 1\}^n$ is a circuit of size $N$ and depth $D$, $x \in \{0, 1\}^n$ is an input string and $i \in \{1, \ldots, n\}$ an index. We can assume that $C$ is \emph{synchronous}, that is, every input of a gate in layer $\ell$ comes from layer $\ell-1$, and also that each layer of $C$ has the same number $n = N/D$ of gates. Indeed, we can obtain a synchronous monotone circuit from any monotone circuit by increasing the number of fates by a polynomial factor \cite{limitsofpara}. For each $\ell \in \{0, \ldots, D\}$, we enumerate the gates in layer $\ell$ as $\{ \ell n + 1, \ell n + 2, \ldots, (\ell+1) n \}$.

We build from $C$ a monotone boolean circuit $\overline{C}$ with $D n$ inputs, $D n$ outputs and depth $1$ as follows. First, we enumerate both the input and output gates of $\overline{C}$ as $\{ 1, \ldots, D n \}$, the latter of which have the same types ($\wedge$ or $\vee$) as their counterparts in $C$. Then, for all $\ell \in \{0, \ldots, D-1\}$ and $k, k' \in \{1, \ldots, n\}$, if there is a wire in $C$ from the gate numbered $\ell n + k$ to the gate numbered $(\ell+1) n + k'$, then we add a wire in $\overline{C}$ from the input numbered $\ell n + k$ to the output numbered $(\ell+1) n + k'$.

\begin{figure}[htbp]
 \begin{center}
\begin{tikzpicture}

\tikzstyle{every node}=[circle,draw,scale=0.95]

\node[draw,circle] (a1) at (0,2) {1};
\node[draw,circle] (a2) at (1,2) {2};
\node[draw,circle] (a3) at (2,2) {3};

\node[draw,circle] (b1) at (0,1) {4};
\node[draw,circle] (b2) at (1,1) {5};
\node[draw,circle] (b3) at (2,1) {6};

\node[draw,circle] (c1) at (0,0) {7};
\node[draw,circle] (c2) at (1,0) {8};
\node[draw,circle] (c3) at (2,0) {9};

\draw[->] (a1)--(b1);
\draw[->] (a1)--(b3);
\draw[->] (a2)--(b2);
\draw[->] (a3)--(b3);
\draw[->] (a3)--(b1);

\draw[->] (b1)--(c2);
\draw[->] (b1)--(c3);
\draw[->] (b2)--(c1);
\draw[->] (b2)--(c3);
\draw[->] (b3)--(c1);
\draw[->] (b3)--(c2);

\draw[ultra thick,->] (2.5,1) -- (3,1);

\foreach \i in {1,...,6}{
	\node (d\i) at (2.5+\i,2) {$\i$};
}
\foreach \i in {1,2,3}{
	\node (e\i) at (5.5+\i,0) {$\i$};
}
\foreach \i in {4,5,6}{
	\node (e\i) at (-.5+\i,0) {$\i$};
}

\draw[->] (d1) -- (e4);
\draw[->] (d1) -- (e6);
\draw[->] (d2) -- (e5);
\draw[->] (d3) -- (e6);
\draw[->] (d3) -- (e4);
\draw[->] (d4) -- (e2);
\draw[->] (d4) -- (e3);
\draw[->] (d5) -- (e1);
\draw[->] (d5) -- (e3);
\draw[->] (d6) -- (e1);
\draw[->] (d6) -- (e2);

\end{tikzpicture}
 \end{center}
\caption{An example of transforming a circuit $C$ of depth $2$ into $\overline{C}$. Note that the outputs of $\overline{C}$ are not in the correct order in the figure.}
\end{figure}
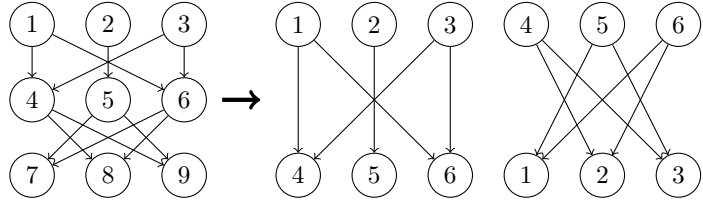 

Let $(x,0^{(D-1)n})$ be the vector of size $D n$ where the first $n$ components are $x$ and the rest are zeros, and let $C_\ell$ be the circuit of $n$ inputs and $n$ outputs and depth $\ell$, obtained by taking the first $\ell$ layers of $C$. It follows directly from the construction of $\overline{C}$ that $\overline{C}^k (x, 0^{(D-1)n}) = (0^{kn}, C^k(x), 0^{(D-(k+1))n})$ for every $1 \leq k \leq D-1$. Moreover, $\overline{C}^D (x, 0^{(D-1)n}) = (C(x),0^{(D-1)n})$. It follows that for all $t \geq 0$, we have $\overline{C}^{t D}(x,0^{(D-1)n})_i = C^t(x)_i$, and $\overline{C}^{t D + \ell}(x,0^{(D-1)n})_i = 0$ if $0 < \ell < D$. Clearly both $\overline{C}$ and $x0^{(D-1)n}$ can be constructed in a time that is polynomial in the size of $C$, and the maximum in-degree and out-degree of $\overline{C}$ are exactly those of $C$, which finishes the proof.

\end{proof}

Let $\mathcal{BN}_k$ the family of block sequential updating schemes with a number of blocks smaller than $k$. 

\begin{corollary}
The problem $\mathcal{BN}_8-\textsc{Majority}$ is \PSPACE-complete, even when restricting to graphs of degree at most $23$.
\end{corollary}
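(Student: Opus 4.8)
The plan is to re-run the reduction behind Theorem~\ref{theo:main} starting from \kdmitci{} instead of \mitci{}, exploiting the fact that a depth-$1$ circuit forces every intermediate construction to use only a bounded number of update blocks, and that the clock gadgets of Lemma~\ref{lem:Clock} can be advanced \emph{in parallel} rather than one at a time. Since \kdmitci{} is \PSPACE-complete even for circuits of degree $4$ (Lemma~\ref{lem:kdepth}) and $\mathcal{BN}_8$-\textsc{Majority} is clearly in \PSPACE, it is enough to give a polynomial-time reduction from \kdmitci{} whose output graphs have degree at most $23$ and carry a block sequential scheme with a constant number of blocks.

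\emph{Step 1 (a clocked network with boundedly many blocks).} Feed the depth-$1$ circuit $C$ into Lemma~\ref{lem:Network}. Because every gate of $C$ reads only circuit inputs, the resulting Boolean graph can be updated with a constant number of blocks: the fixed-value auxiliary vertices of the gadgets are self-sustaining (their majority equals their current value regardless of the gate node) and so may be placed anywhere, while a gate node reads the correct (pre-block) value of a circuit input as long as the circuit inputs occupy a block that is not earlier. Concretely, two blocks suffice (gate gadgets, then circuit inputs). Now run the construction in the proof of Lemma~\ref{lemma:clmaPSPACE} (Figure~\ref{fig:ClockedLayers}) on this $C$, taking each level of the picture as a single update block; since $C$ now contributes only $O(1)$ levels, the clocked network obtained has a block sequential update scheme with constantly many blocks, while its degree is at most $7$ (as $C$ has degree $4$). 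Thus \clma{} is \PSPACE-complete already on degree-$7$ graphs equipped with a block sequential scheme of constantly many blocks.

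\emph{Step 2 (simulate the clocked network by a majority network).} Apply the construction in the proof of Theorem~\ref{theo:main} essentially verbatim: to every vertex $v$ of the clocked network attach a $\lceil d_v/2\rceil$-amplified copy of the clock gadget of Lemma~\ref{lem:Clock}, wiring $v$ to $d_v+1$ copies each of $c^\downarrow(v)$ and $c^\uparrow(v)$; the degree bound is again $3d+2=23$ for $d=7$. The only change is the update scheme: because the clock gadgets attached to distinct vertices are pairwise non-adjacent and the $k$-amplification operation preserves the $4$-block schedule of Lemma~\ref{lem:Clock}, we may advance \emph{all} of them simultaneously by globally executing that common $4$-block schedule. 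One application of the resulting scheme --- the $O(1)$ blocks inherited from the clocked network, with every clock gadget frozen at its current phase, followed by the four blocks that advance every clock gadget --- then simulates one step of the clocked network exactly as in Theorem~\ref{theo:main}; the designated output vertex $w_i$ of the \kdmitci{} instance is initially inactive and turns active precisely when some iterate satisfies $C^t(x)_i=1$. Finally, a careful count of the blocks --- in particular, the last block inherited from the clocked network may be merged with clock-advancement blocks, since these touch disjoint vertex sets and the clocked-network vertices only read the pre-block (hence still correct) clock values --- shows that the scheme lies in $\mathcal{BN}_8$.

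The main obstacle is precisely this last bookkeeping: bringing the number of blocks down to the stated bound while making sure that the parallel clock advancement keeps all amplified clock gadgets in step, and that at every sub-step of one simulated clocked-network step the clock vertices read by the clocked-network vertices are holding the phase-$(t\bmod 3)$ values --- that is, that the robustness guarantee of the amplification really does survive embedding the gadgets into the larger graph. The remaining points (setting up the initial configuration with clocks in the right phase, auxiliary vertices at their fixed values, and everything outside the encoded input string inactive) are routine.
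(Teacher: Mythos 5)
Your proposal is correct and follows essentially the same route as the paper: replace \mitci{} by the depth-$1$ version via Lemma~\ref{lem:kdepth} so that the circuit contributes no extra levels to the clocked network of Lemma~\ref{lemma:clmaPSPACE}, update each level as one block and all clock gadgets in parallel via their common $4$-block schedule, and count levels plus clock blocks to land in $\mathcal{BN}_8$ with degree $3\cdot 7+2=23$. The paper's own proof is a two-sentence sketch of exactly these modifications to Theorem~\ref{theo:main}; your additional bookkeeping about block merging and the robustness of the amplified clocks only makes explicit what the paper leaves implicit.
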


\begin{proof}
This result is proved similarly to Theorem \ref{theo:main}, with the following modifications. First, we can assume by Lemma~\ref{lem:kdepth} that the circuit $C$ has depth $1$, so that it adds no extra levels to the network. Second, the clock gadgets are not updated one by one, but in parallel, as are the vertices of every level. Then the total number of update blocks is the number of levels, plus $4$ (from the clocks), or $8$ in total.
\end{proof}

\section{Other decision problems: the complexity of the Majority Rule}

\label{setction:problems} 

As the title of this paper suggest, we propose that the `correct' measure of the complexity of a class of majority automata networks is the complexity of the prediction problem \shmajority{} restricted to that class, or in other words, the problem of predicting the state changes in some particular vertex of a given network. In this section we give some evidence for this statement by showing that there are many other natural prediction problems that \bsma{} can be reduced to, showing that they are also \PSPACE-complete. Below are three examples of such problems.

\begin{framed}

\noindent {$\SH$-\textsc{Eventual-Prediction}}: Given a majority network $\mathcal{A} = (G, (f_v)_{v \in V})$, an updating scheme $S \in \SH$, an initial configuration $x \in \{0,1\}^V$ of $\mathcal{A}$, and an initially inactive vertex $v \in V(G)$, does there exists $t_0 > 0$ such that $x(t)_v = 1$ for all $t \geq t_0$?

\end{framed}

\begin{theorem}
\textsc{Bseq-Eventual-Prediction} is \PSPACE-complete.
\label{thm:Eventual}
\end{theorem}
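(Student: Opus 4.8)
The plan is to reduce \bsma{} to \textsc{Bseq-Eventual-Prediction}. Membership in \PSPACE{} is immediate: since a majority network with a block sequential updating scheme is a finite deterministic dynamical system on $2^{|V|}$ configurations, we can detect in polynomial space whether the trajectory of $x$ eventually stabilizes at $x(t)_v=1$ by simulating until the configuration repeats (remembering only the current configuration, a step counter, and the value $x(t)_v$ along the way), which is enough to read off the eventual behaviour of coordinate $v$. So the work is in the hardness reduction.

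Given an instance $(\mathcal{A}, S, x, v)$ of \bsma{}, recall the question is whether $y_v = 1$ for \emph{some} $y \in T_S(x)$; we must convert this ``exists a time'' question into an ``all sufficiently large times'' question. The natural idea is to add a \emph{latch} gadget: a new vertex $v'$ (together with a few auxiliary vertices) that is initially inactive, becomes active the first time $v$ is active, and thereafter stays active forever regardless of what $v$ does. If we can build such a latch as a majority network attached to $\mathcal{A}$ in a way that does not disturb the dynamics of the original network, then $(\mathcal{A}', S', x', v')$ is a yes-instance of \textsc{Bseq-Eventual-Prediction} if and only if $(\mathcal{A}, S, x, v)$ is a yes-instance of \bsma{}. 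The first step would therefore be to design this latch. One clean way: use the amplification technique (the $k$-amplification lemma stated earlier) so that adding a constant number of new neighbours to the relevant vertices of $\mathcal{A}$ does not change the behaviour of those vertices, and then attach to (an amplified copy of) $v$ a small sub-network containing the target vertex $v'$ whose majority rule, together with some always-active ``clock'' vertices of the type furnished by Lemma~\ref{lem:Clock}, forces monotone ``stick-at-$1$'' behaviour: once $v'$ sees enough active neighbours (which happens exactly when $v$ first turns on), its own activity plus a fixed block of permanently active padding keeps its neighbourhood majority above threshold forever. The block sequential schedule is extended by placing the latch vertices in their own update turn after $v$ is updated.

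The main obstacle is the interaction between the latch and the surrounding dynamics. A majority vertex has no intrinsic memory — its next state depends only on the current neighbourhood — so ``remembering forever that $v$ was once active'' must be engineered through the graph structure and the initial configuration, and it must be robust to the adversarial choices made by the rest of $\mathcal{A}$ (which may turn $v$ on and off many times, and whose block structure we do not control beyond what the original instance gives us). Concretely I would need: (i) enough permanently-active padding adjacent to $v'$ that, the moment $v'$ itself becomes active, the pair ``$v'$ plus padding'' already constitutes a strict majority of $v'$'s neighbourhood, so $v'$ self-sustains; (ii) a weighting/degree balance so that before $v'$ has ever fired, the padding alone is \emph{not} a majority, so $v'$ fires precisely when $v$ does; and (iii) verification that attaching this gadget, via the amplification of Lemma on amplification, leaves $F^t_S(x)_{\phi(v)}$ unchanged for all $t$. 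Once the gadget is verified by the same kind of finite case-check used for Lemma~\ref{lem:Clock}, the correctness argument is a short induction: $v'$ is $0$ until the first time $v$ is $1$, and $1$ ever after; hence there is a time after which $v'$ is always $1$ if and only if $v$ is $1$ at some time. Since the reduction is clearly polynomial-time, \textsc{Bseq-Eventual-Prediction} is \PSPACE-complete.
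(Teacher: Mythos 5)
Your approach is the same as the paper's: membership by direct simulation until the (detectable) period is reached, and hardness by reducing \bsma{} via a latch gadget attached to $v$ whose new, initially inactive vertices turn on the first time $v$ fires and stay on forever. You correctly identify the three properties the latch must have, but you never actually construct it, and that construction is essentially the entire content of the proof. The paper's gadget adds six new vertices, arranged so that $v$ receives exactly two permanently-active and two initially-inactive new neighbours; since adding equal numbers of active and inactive neighbours leaves the strict-majority condition unchanged ($m > k/2$ iff $m+2 > (k+4)/2$), the original dynamics are preserved up to and including the first activation of $v$, after which the two inactive vertices become active and thereafter sustain each other together with the active padding, independently of $v$. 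This balancing trick makes your appeal to the amplification lemma unnecessary -- and note that the amplification lemma as stated requires every vertex of the original network to have odd degree, which an arbitrary \bsma{} instance need not satisfy, so that route would require an extra preprocessing step you do not supply. To complete your argument you would need to exhibit an explicit latch meeting your conditions (i)--(iii), fix where its vertices sit in the extended block-sequential schedule (they must be updated after $v$ within the same global step, so that both latch vertices fire before $v$ can switch off again), and verify the behaviour by a finite case-check.
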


\begin{proof}


The \bsma{} problem can be reduced to this problem by attaching the gadget of Figure \ref{fig:EventualGadget} to the vertex $v \in V(G)$ of the \bsma{} problem (the central vertex in the figure), and considering either of the new inactive vertices. They become active as soon as $v$ does, and then stay active forever. The problem is still in \PSPACE, since we can just simulate the graph until we reach a period, which can be easily detected.
\end{proof}

\begin{figure}[htp!]
\begin{center}
\begin{tikzpicture}
\node[draw,circle] (a) at (0,0) {$v$};
\node[draw,circle,fill=gray!50] (b) at (1,0) {};
\node[draw,circle,fill=gray!50] (c) at (1,1) {};

\node[draw,circle,fill=gray!50] (d) at (.5,1.5) {};
\node[draw,circle,fill=gray!50] (e) at (0,1) {};

\node[draw,circle] (f) at (-1,0) {};
\node[draw,circle] (g) at (-1,1) {};

\path
	(a) edge (b) edge (c) edge (f) edge (g)
	(e) edge (f) edge (g) edge (b) edge (c) edge (d)
	(c) edge (b) edge (d)
	(g) edge (f);
\end{tikzpicture}
\end{center}
\label{fig:EventualGadget}
\caption{The gadget of Theorem~\ref{thm:Eventual}.} 
\end{figure}
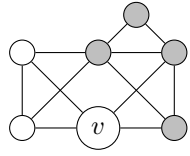 

\begin{framed}

\noindent {$\SH$-\textsc{Conditional-Prediction}}: Given a majority network $\mathcal{A} = (G, (f_v)_{v \in V})$, an updating scheme $S \in \SH$, a subset $W \subset V$ and a partial configuration $y \in \{0,1\}^W$, and an initially inactive vertex $v \in W$, does there exist a configuration $x \in \{0,1\}^V$ with $x|_W = y$ such that $x(t)_v = 1$ for some $t > 0$?


\end{framed}

\begin{theorem}

\textsc{Bseq-Conditional-Prediction} is \PSPACE-Complete.

\end{theorem}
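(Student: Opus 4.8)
The goal is to show that $\textsc{Bseq-Conditional-Prediction}$ is $\PSPACE$-complete. Membership in $\PSPACE$ is routine: given a candidate $x$ agreeing with $y$ on $W$, we simulate the trajectory $T_S(x)$ vertex by vertex, which takes polynomial space; since trajectories are eventually periodic with a period at most $2^{|V|}$, a $\PSPACE$ machine can detect a repeat and hence decide whether $v$ ever becomes active. Non-deterministically guessing the free coordinates of $x$ and then simulating puts the problem in $\NPSPACE = \PSPACE$, so membership is clear.

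For hardness, I would reduce from $\bsma$, which is $\PSPACE$-complete by the corollary following Theorem~\ref{theo:main}. Given an instance $(\mathcal{A}, S, x, v)$ of $\bsma$ — a majority network $\mathcal{A} = (G, (f_u)_{u \in V})$, a block sequential scheme $S$, an initial configuration $x$, and an initially inactive target vertex $v$ — the most direct move is to take $W = V$ and $y = x$. Then the only configuration agreeing with $y$ on $W$ is $x$ itself, so the existential quantifier over $x' \in \{0,1\}^V$ with $x'|_W = y$ collapses to a single possibility, and the question ``does there exist $x'$ with $x'|_W = y$ and $x'(t)_v = 1$ for some $t > 0$'' becomes exactly ``does there exist $t > 0$ with $x(t)_v = 1$'', which is precisely the $\bsma$ question (using $x_v = 0$, so the $t > 0$ versus $t \in \N$ distinction is harmless). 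This reduction is obviously polynomial-time computable: the network, scheme, and vertex are copied verbatim, and we simply declare $W = V$ and $y = x$.

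**Main obstacle.** There is essentially no technical obstacle here — the reduction is trivial once one notices that $\textsc{Conditional-Prediction}$ generalizes $\bsma$ by allowing part of the initial configuration to be unspecified, so fixing the whole configuration recovers $\bsma$. The only point requiring a line of care is checking that setting $W = V$ is a \emph{legal} instance of $\textsc{Conditional-Prediction}$ (it is: $W \subset V$ allows $W = V$, and $v \in W$ holds), and that the target vertex remains initially inactive as required by the problem statement (it is, since $x_v = 0$ in the $\bsma$ instance and $y_v = x_v = 0$). One could alternatively, if one wanted a reduction that genuinely exercises the partial-configuration feature, take $W = V \setminus \{u\}$ for some vertex $u$ that is forced by the majority rule on the first step regardless of its initial value, but this is unnecessary: the $W = V$ reduction already establishes $\PSPACE$-hardness, and combined with the membership argument it completes the proof.
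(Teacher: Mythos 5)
Your proof is correct and matches the paper's own argument: hardness follows because \bsma{} is the special case $W = V$, $y = x$ (the $t>0$ versus $t\in\N$ issue being moot since $x_v=0$), and membership follows by cycling through the completions of $y$ in polynomial space and simulating. The paper's proof is a one-liner saying exactly this, so no further comment is needed.
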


\begin{proof}
This problem is clearly at least as hard as \bsma{}, and it is in \PSPACE{} since we can simply enumerate all possibilities for the values of the vertices $V \setminus W$ in polynomial space, and run the simulations.
\end{proof}

\begin{framed}

\noindent {$\SH$-\textsc{Full-Prediction}}: Given a majority network $\mathcal{A} = (G, (f_v)_{v \in V})$, an updating scheme $S \in \SH$, and an initial configuration $x \in \{0,1\}^V$, does there exist a time $t > 0$ such that $x(t)_v = 1$ for all $v \in V$?

\end{framed}

\begin{theorem}
\label{thm:Full}
\textsc{Bseq-Full-Prediction} is \PSPACE-Complete.

\end{theorem}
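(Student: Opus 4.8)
\textbf{Proof proposal for Theorem~\ref{thm:Full}.}

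The plan is to reduce \bsma{} to \textsc{Bseq-Full-Prediction} by taking an instance of \bsma{} and modifying the network so that eventually \emph{every} vertex becomes active precisely when the distinguished vertex $v$ of the original instance does. First I would observe that we may freely assume the following normal form for a \bsma{} instance: once the target vertex $v$ becomes active, it never changes back. This can be arranged by the same trick used in the proof of Theorem~\ref{thm:Eventual}: attach to $v$ a small gadget (a clique-like structure of auxiliary vertices initialized active) that latches as soon as $v$ fires and thereafter forces $v$ to stay $1$ forever; the answer to the original \bsma{} question is unchanged. So from now on, ``$v$ ever becomes active'' is equivalent to ``from some time on, $v$ is permanently active.''

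Next I would graft onto this latched network a propagation layer: introduce, for every original vertex $u$, a fresh ``copy'' vertex $u'$ together with enough always-active auxiliary vertices so that $u'$ becomes $1$ as soon as (a copy of) $v$ is active, and once $1$ stays $1$. Concretely, one wires each $u'$ into the latched $v$-gadget via a majority sub-gadget (analogous to Figure~\ref{fig:EventualGadget}) so that $u'$ has strict majority of active neighbors exactly once $v$ has fired, and the auxiliary active vertices on $u'$ guarantee it then remains active. The original vertices $u$ themselves must also be forced to $1$ eventually; for those that are not already permanently $1$ under the dynamics, one can likewise attach a private copy of the same latch-and-propagate gadget driven off the global $v$-signal, chosen so it is inert (contributes a fixed sub-majority of active neighbors) until $v$ fires. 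One must check the degree bookkeeping: each such attachment adds only a constant number of neighbors to each vertex, so the $k$-amplification technique (or simply using odd-degree padding as in Section~\ref{section:main}) keeps all ties broken the right way and does not disturb the pre-existing dynamics before $v$ fires. After these modifications, there exists $t$ with $x(t)_w = 1$ for all $w$ in the enlarged network if and only if $v$ ever fires in the original instance: if $v$ fires, after finitely many further steps every copy and every latched original vertex is $1$ simultaneously (they all stay $1$ once lit, so a common witnessing time exists); if $v$ never fires, the copy vertices stay $0$ forever, so no such $t$ exists.

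Finally, membership in \PSPACE{} is immediate and identical to the other cases: simulate the global transition function step by step, storing only the current configuration (polynomial space), until a previously seen configuration recurs — detectable by, e.g., a Floyd-style two-pointer simulation in polynomial space — which bounds the search, and check whether an all-ones configuration occurred along the way. The main obstacle is the second step: arranging the propagation/latching gadgets so that they are provably \emph{silent} (never perturb the original majority computations, never spuriously fire) until $v$ activates, while still having every vertex's degree stay odd and bounded so the tie-breaking rule is never invoked — this is exactly the kind of delicate gadget wiring and degree-amplification argument already carried out in the proof of Theorem~\ref{theo:main} and Theorem~\ref{thm:Eventual}, so it should go through with care.
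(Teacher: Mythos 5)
Your overall strategy -- reduce from \bsma{} by grafting on machinery that propagates the activation of $v$ to every vertex, with balanced (half-active, half-inactive) attachments so that nothing is perturbed before $v$ fires -- is exactly the strategy of the paper's proof, and your \PSPACE{} membership argument is fine. But the entire substance of that proof is the propagation gadget itself, and that is precisely the piece you leave unresolved (you even flag it yourself as ``the main obstacle''). Moreover, the specific wiring you sketch has a concrete problem: if every copy vertex $u'$ (and every per-vertex latch) is ``wired into the latched $v$-gadget'' or ``driven off the global $v$-signal,'' then some vertex of that constant-size latch acquires degree linear in $|V(G)|$, and its majority computation is then governed by the states of the $n$ attached vertices rather than by the latch logic; it is no longer ``provably silent.'' This can in principle be repaired by amplifying the latch by a factor proportional to $n$ so that each amplified copy absorbs only boundedly many external edges, but you would have to carry that out, and it is not a routine invocation of the amplification lemma (which amplifies a whole network, not a sub-gadget hanging off a designated vertex). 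A second, smaller issue: you propose attaching the forcing gadget only to those original vertices ``that are not already permanently $1$ under the dynamics'' -- but determining which vertices those are is itself the hard problem, so the gadget must be attached uniformly to every vertex (which is harmless, since it is balanced until $v$ fires).

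The paper avoids the fan-out problem entirely with a cyclic chain rather than a star. Enumerating $V(G)$ as $v_1,\ldots,v_k$ with $v_1=v$ and letting $d$ be an odd bound on the degree, it attaches to each $v_i$ an inactive clique $K_i^0$ of size $d$ and an always-active clique $K_i^1$ of size $3d-1$, giving $v_i$ exactly $d$ new active and $d$ new inactive neighbors (so its original dynamics are untouched), and links $K_i^0$ to $K_{i+1 \bmod k}^0$. The sizes are tuned so that each vertex of $K_1^0$ sits exactly at the tipping point: it flips to active precisely when $v$ does, then latches, and the activation travels around the ring $K_1^0 \to K_2^0 \to \cdots$, after which every $v_i$ sees $2d$ active neighbors out of at most $3d$ and turns active too. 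No vertex degree depends on $|V(G)|$, no amplification is needed, and the latching of $v$ you treat as a separate preprocessing step falls out of the same gadget for free. I would encourage you to either reproduce a construction of this kind explicitly or work out the amplified-star version in full; as written, the proof of the key equivalence is not yet there.
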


\begin{proof}
This problem is also clearly in \PSPACE, and the \bsma{} problem can be reduced to it as follows. Suppose we are given a majority network $\mathcal{A}$ with underlying graph $G$ with maximum degree bounded by an odd number $d \geq 3$, and the special vertex $v \in V(G)$. Enumerate the vertices of $G$ as $v_1, \ldots, v_k$ such that $v_1 = v$. For each $i \in \{1, \ldots, k\}$, add to $G$ two new complete graphs, $K_i^0$ with inactive vertices and $K_i^1$ with active vertices, of respective sizes $d$ and $3d-1$. Introduce an edge from each vertex of $K_i^0$ to $v_i$, to each vertex of $K_{i+1 \bmod k}^0$, and to each vertex in a subset of $K_i^1$ of size $2d$ (in the case $i = 1$, to every vertex of $K_i^1$). From any $d$ vertices of $K_i^1$, introduce new edges to $v_i$. See Figure~\ref{fig:FullGadget} for a visualization of the construction. Call this new majority network $\mathcal{B}$ with underlying graph $H$. We extend the update scheme of $\mathcal{A}$ to $\mathcal{B}$ so that the original nodes are updated first, then the new ones in any order.

Now, every vertex of $G$ has received exactly $2d$ new neighbors, half of which are initially inactive. It is clear that the vertices of each $K_i^1$ will always stay active, since they have $3d-2$ active neighbors in $K_i^1$ itself, and at most $d+1$ other neighbors. If every vertex of $K_1^0$ stays inactive, then for $i \neq 1$, each vertex of $K_i^0$ will also do so, since they have at least $3d-1$ inactive neighbors, and at most $2d+1$ active ones. Now, every vertex of $K_1^0$ has $3d-1$ inactive and $3d-1$ active neighbors, plus the special vertex $v = v_1$. If we have $x(t)_v = 1$ for some $t > 0$, then necessarily $x(t')_w = 1$ for every vertex $w \in K_0^0$ and every $t' \geq t$. At time $t$, every vertex of $K_2^0$ then has at least $3d$ active and at most $2d$ inactive neighbors, so they will become active on the next time step, and stay that way. Inductively, every vertex of $H \setminus G$ will become active, and after that, so will the vertices of $G$, since $d$ is an upper bound for their degree in $G$. Thus the graph $H$ will eventually contain only active vertices if and only if $x(t)_v = 1$ for some time $t > 0$.
\end{proof}

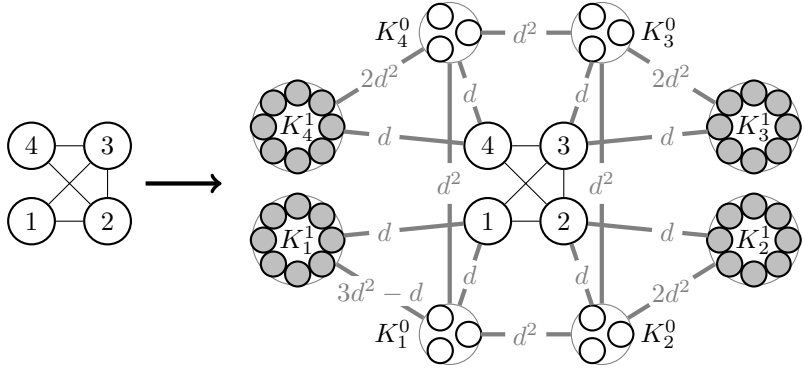
\begin{figure}[htp!]
\begin{center}
\begin{tikzpicture}

\tikzstyle{vertex}=[circle,thick,draw,fill=white,minimum size=3mm]

\node[vertex] (v1) at (0,0) {$1$};
\node[vertex] (v2) at (1,0) {$2$};
\node[vertex] (v3) at (1,1) {$3$};
\node[vertex] (v4) at (0,1) {$4$};
\draw (v2) -- (v1) -- (v3) -- (v4) -- (v2) -- (v3);

\draw[ultra thick,->] (1.5,.5) -- (2.5,.5);

\begin{scope}[xshift=6cm]

\node[vertex] (v1) at (0,0) {$1$};
\node[vertex] (v2) at (1,0) {$2$};
\node[vertex] (v3) at (1,1) {$3$};
\node[vertex] (v4) at (0,1) {$4$};
\draw (v2) -- (v1) -- (v3) -- (v4) -- (v2) -- (v3);

\foreach \x/\y [count=\i] in {-.5/-1.5,1.5/-1.5,1.5/2.5,-.5/2.5}{
	\node [draw,circle,gray,inner sep=0,minimum size=.8cm] (k\i0) at (\x,\y) {};
	\foreach \ang [count=\j] in {0,120,240}{
		\node[vertex] (k\i0\j) at ($ (\x,\y) + (\ang:.25) $) {};
	}
}

\foreach \x/\y [count=\i] in {-2.5/-.25,3.5/-.25,3.5/1.25,-2.5/1.25}{
	\node [draw,circle,gray,inner sep=0,minimum size=1.2cm] (k\i1) at (\x,\y) {};
	\foreach \ang [count=\j] in {0,45,90,135,180,225,270,315}{
		\node[vertex,fill=gray!50] (k\i1\j) at ($ (\x,\y) + (\ang:.45) $) {};
	}
}

\foreach \i/\ni in {1/2,2/3,3/4,4/1}{
	\draw[ultra thick,gray] (v\i) edge node[fill=white,circle,minimum size=0,inner sep=1] {$d$} (k\i0);
	\draw[ultra thick,gray] (v\i) edge node[fill=white,circle,minimum size=0,inner sep=1,pos=.66] {$d$} (k\i1);
	\draw[ultra thick,gray] (k\i0) edge node[fill=white,circle,minimum size=0,inner sep=1] {$d^2$} (k\ni0);
}
\draw[ultra thick,gray] (k10) edge node[fill=white,minimum size=0,inner sep=1] {$3d^2-d$} (k11);
\foreach \i in {2,3,4}{
	\draw[ultra thick,gray] (k\i0) edge node[fill=white,circle,minimum size=0,inner sep=1] {$2d^2$} (k\i1);
}

\node () at (-1.25,-1.5) {$K^0_1$};
\node () at (2.25,-1.5) {$K^0_2$};
\node () at (2.25,2.5) {$K^0_3$};
\node () at (-1.25,2.5) {$K^0_4$};

\node () at (-2.5,-.25) {$K^1_1$};
\node () at (3.5,-.25) {$K^1_2$};
\node () at (3.5,1.25) {$K^1_3$};
\node () at (-2.5,1.25) {$K^1_4$};

\end{scope}

\end{tikzpicture}
\end{center}
\label{fig:FullGadget}
\caption{The construction of Theorem~\ref{thm:Full} applied to a simple example graph of degree $d = 3$. The circular formations denote complete subgraphs, and the thick gray lines denote multiple edges (their number is given in the label).}
\end{figure}

\section{Moving the threshold}
\label{setction:proportion}

In this section we will show that our results are valid for more general update rules than the majority rule. For a real number $p \in (0,1)$ and a graph $G$, consider the rule obtained from the following function for each vertex $v \in V(G)$:
\[ f^p_i(x) = \left\{ \begin{array}{ll}
  1, & \mathrm{if~} \sum_{j\in N(i)} x_j > p\vert N(i) \vert, \\
  0, & \mathrm{if~} \sum_{j\in N(i)} x_j \leq p\vert N(i) \vert,
\end{array} \right. \]
where $N(v) \subset V(G)$ is the set of neighbors of $v$ in $G$. The rule obtained from this vertex function is called \emph{portion-$p$} rule, and the automata network with the portion-$p$ rule will be called a \emph{portion-$p$ network}. Notice that the majority network is a portion-$1/2$ network. The dynamics of a portion-$p$ network is denoted by $F_p$. For this class of functions, we can define the one cell prediction problem for a fixed $p \in (0,1)$ and a family $\SH$ of updating schemes:

\begin{framed}
\noindent {\socp$(p)$}: Given a portion-$p$ network $\mathcal{A} = (G, (f^p_v)_{v \in V})$, an updating scheme $S \in \SH$, a configuration $x \in \{0,1\}^V$ of $\mathcal{A}$, and an initially inactive vertex $v \in V(G)$, does there exist $y \in T_S(x)$ such that $y_v = 1$?
\end{framed}


Using the same arguments given for the majority network, for any portion $p \in (0,1)$ the problems \seocp$(p)$ and \syocp$(p)$ can be solved in polynomial time by simply simulating the dynamics of the network. In the following, we will show that for any $p \in (0,1)$ satisfying the appropriate computability conditions, the block sequential version \bsocp$(p)$ is \PSPACE-complete.

\begin{theorem}
For all $p \in (0,1)$ whose digits can be computed in polynomial space, the problem \bsocp$(p)$ is \PSPACE-complete.
\end{theorem}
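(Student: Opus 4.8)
The plan is to reduce the \PSPACE-complete problem $\mathcal{BS}_8$-\textsc{Majority} restricted to graphs of maximum degree $23$ (Theorem~\ref{theo:main}) to \bsocp$(p)$; note that the instances we produce are genuine \bsocp$(p)$ instances since they use a block-sequential update scheme. The idea is that a portion-$p$ threshold can be made to behave like a majority threshold at a vertex $w$ by attaching to $w$ a suitable \emph{constant} number of auxiliary vertices that are \emph{frozen} in state $1$ or in state $0$: if $w$ originally has degree $d$ and we give it $a$ frozen-active and $b$ frozen-inactive extra neighbours, then, writing $s$ for the number of active original neighbours of $w$, the portion-$p$ rule activates $w$ exactly when $s+a>p(d+a+b)$, i.e. when $s>p(d+a+b)-a$, and the task is to choose $a,b$ so that this is equivalent to $s>d/2$ in the integer sense.

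First I would build the frozen gadgets. Since $p\in(0,1)$ is a fixed constant, a clique of $M$ vertices that all start active and each have at most one neighbour outside the clique stays entirely active forever once $M(1-p)>1$; dually, a clique of $M'$ vertices that all start inactive and each have at most one outside neighbour stays entirely inactive once $M'p\ge 1$. Fix such constants $M,M'$, also taken at least as large as the pad sizes below. Next comes the arithmetic: fix a constant target degree $n$, depending only on $p$, large enough that for every $d\in\{1,\ldots,23\}$ the unique integer $a=a_d$ in the half-open interval $(\,pn-\lfloor d/2\rfloor-1,\ pn-\lfloor d/2\rfloor\,]$ also satisfies $0\le a_d\le n-d$; set $b_d=n-d-a_d$. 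For a vertex $w$ of degree $d$ we attach $a_d$ vertices of a fresh frozen-active clique (of size $M$, the other $M-a_d$ vertices joined only inside the clique) and $b_d$ vertices of a fresh frozen-inactive clique (of size $M'$), so that $w$ acquires degree exactly $n$. Then the portion-$p$ rule activates $w$ iff $s+a_d>pn$, i.e. iff $s>pn-a_d$, and $pn-a_d\in[\lfloor d/2\rfloor,\lfloor d/2\rfloor+1)$ by the choice of $a_d$, so this is equivalent to $s>d/2$, which is precisely the majority rule at $w$, with the ``$\le$'' tie resolved to $0$ exactly as in the original network (one checks the tie case $s=d/2$ gives $s+a_d=d/2+a_d\le pn$). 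Computing $a_d$ from $d$, $n$, $p$ needs only a constant number of binary digits of $p$, which are available by hypothesis.

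Carrying this out for every vertex yields a portion-$p$ network $\mathcal{B}$ of size linear in that of the input; we keep the original initial configuration on the original vertices, put each frozen clique into its prescribed constant state, leave the designated vertex $v$ inactive, and extend the update scheme so that the original vertices are updated first (in their original blocks and order) and then all clique vertices are updated in one last block. Because every clique vertex is frozen and every original vertex now sees its original neighbourhood together with the intended number of constant $1$'s and $0$'s, the trajectory of $\mathcal{B}$ restricted to the original vertices coincides with that of the original majority network; hence $v$ eventually becomes active in $\mathcal{B}$ iff it does in the given instance, and the reduction is polynomial. For membership in \PSPACE{} one simply simulates the portion-$p$ dynamics for up to $2^{|V|}$ steps while watching $v$; each application of $f^p_v$ reduces to deciding $\sum_{j\in N(v)}x_j>p|N(v)|$, which is doable in polynomial space using the assumed polynomial-space computability of the digits of $p$.

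The only genuine obstacle is the bookkeeping in the middle step: exhibiting a single constant $n$ for which all pad sizes $a_d,b_d$ are simultaneously nonnegative and at most $n-d$, getting the ``$>$ versus $\le$'' tie behaviour to match the majority rule exactly, and confirming that the frozen cliques really are immune to whatever the rest of $\mathcal{B}$ (in particular $v$) does. Everything else is routine once the \PSPACE-hardness of majority networks is in hand.
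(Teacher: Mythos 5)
Your proposal is correct and follows essentially the same route as the paper: a reduction from the majority prediction problem in which each vertex is padded with frozen cliques so that the portion-$p$ threshold on the enlarged neighbourhood coincides, integer for integer (including the tie case), with the majority threshold on the original one. The only cosmetic differences are that you pad with both frozen-active and frozen-inactive cliques to a common constant degree $n$ (starting from the bounded-degree instances of Theorem~\ref{theo:main}), whereas the paper attaches only inactive cliques of size $n(w)$ chosen so that $\lfloor p(d_G(w)+n(w))\rfloor=\lfloor d_G(w)/2\rfloor$ for $p<1/2$ and argues the case $p>1/2$ by symmetry.
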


\begin{proof}

Let $p \in (0,1)$, and assume $p < 1/2$, the other case being essentially symmetric. We will reduce \bsma{} to \bsocp$(p)$. Let $\mathcal{A}$ be a majority network, $G$ the underlying graph of $\mathcal{A}$, $S_G$ a block sequential updating scheme, $x \in \{0,1\}^V$ a configuration of $\mathcal{A}$ and $v \in V(G)$. 


For each vertex $w \in G(V)$, denote $d_G(w) = |N_G(w)|$, and let $n(w)$ be any integer satisfying
\begin{equation}
\label{eq:Proportion}
\lfloor p(d_G(w) + n(w)) \rfloor = \left\lfloor \frac{d_G(w)}{2} \right\rfloor,
\end{equation}
which exists since $p < 1/2$. We also note that $p n(w) < (1/2-p) d_G(w) + 1$ holds, so that $n(w)$ is at most linear in $d_G(w)$ when $p$ is a constant. Now, we construct a new graph $H$ from $G$ by attaching to it a complete graph $K_w$ of $\max ( n(w), \lceil 1/p \rceil + 1 )$ inactive vertices for each $w \in V(G)$, and adding a new edge from $w$ to $n(w)$ vertices of $K_w$.

We now claim that the subgraph $G$ of the portion-$p$ network $(H, (f_w)_{w \in V(H)})$ behaves identically to the majority network $\mathcal{A}$, if the new vertices are inactive, which then proves the claim. For this, it suffices to note that for all such configurations $y \in \{0,1\}^{V(H)}$ and all $w \in V(G)$ we have $\sum_{u \in N_H(w)} y_u = \sum_{u \in N_G(w)} y_u$, and by \eqref{eq:Proportion}, these integers are larger than $\frac12 d_G(w)$ if and only if they are larger than $p d_H(w)$. Thus we have $F_p(y)_w = F(y|_G)_w$, and by induction $F_p^t(y)_w = F^t(y|_G)_w$ for all $t \geq 0$. In particular, if we denote by $x' \in \{0, 1\}^{V(H)}$ the extension of $x$ to $H$, then $F^t(x)_v = 1$ holds if and only if $F_p^t(x')_v = 1$ does.
\end{proof}

We note that in the above reduction, it is essential that ties are handled the same way in majority networks and portion-$p$ networks.

\section{Conclusion}

In this article, we have studied the computational complexity of predicting the evolution of Boolean majority networks under the block sequential update schemes. It turns out that, while the sequential and synchronous schemes admit a polynomial-time prediction algorithm, predicting the behavior of a single vertex is \PSPACE-complete in the more general case of block sequential schemes. This is due to two factors: first, arbitrary Boolean circuits can be simulated by majority networks (which is also why the prediction of the sequential and synchronous schemes is \P-complete \cite{PaperGoles}), and second, the block sequential scheme allows us to construct small gadgets, which we call clocks, that empty the circuit once its computation is finished, and transport the output of the computation back into the input vertices, so that the circuit can be iterated indefinitely. We also argue that the one cell prediction problem is a fundamental one by reducing other prediction problems to it, namely, the prediction of one cell being eventually active for some time on, the prediction of all cells becoming active, and the one cell prediction problem for networks with different thresholds.

We have tried to minimize the maximum degree, block size, and the number of blocks in our theorems to some extent, but it is very likely that they can be further improved. We leave it as an open problem to find the minimum values for there parameters that still keep the problems \PSPACE-complete.

A natural continuation of this research would be to study Boolean networks with even more general update rules. For example, consider the case of \emph{AND-OR networks}, which are Boolean automata networks $\mathcal{A} = (G, (f_v)_{v \in V})$ where each vertex function $f_v$ computes either $\min N_G(v)$ or $\max N_G(v)$. In such networks, implementing a monotone Boolean circuit is already difficult, since a single inactive neighbor, causes a $\min$-vertex to become inactive in the next step, but every simulated gate should somehow be connected to the vertices that correspond to its outputs. This problem has been overcome in \cite{PaperGoles}, where it was shown that for sequential and synchronous schemes, the prediction problem of AND-OR networks is in fact \P-complete. With block sequential schemes, there again exist limit cycles of exponential lengths, which suggests that the prediction problem may be \PSPACE-complete in this case.

\bibliographystyle{plain}
\bibliography{bib}{}

\end{document}